\colorlet{darkblue}{blue!40!black}
\definecolor{auburn}{rgb}{0.43, 0.21, 0.1}
\newcommand{\OPT}[0]{\textrm{OPT}}
\newcommand{\E}{\mathbb{E}} %
\newcommand{\PP}{\mathbb{P}} %
\newcommand{\gettikzxy}[3]{%
  \tikz@scan@one@point\pgfutil@firstofone#1\relax
  \edef#2{\the\pgf@x}%
  \edef#3{\the\pgf@y}%
}
\newcommand{\vecAgent}[0]{A}
\newcommand{\agent}[0]{A}
\newcommand{\vecQueue}[0]{Q}
\newcommand{\queue}[0]{Q}
\newcommand{\typepolicy}[0]{\phi}
\newcommand{\typepolicyset}[0]{\Phi}
\newcommand{\vecQ}[0]{A}
\newcommand{\vece}[0]{e}
\newcommand{\RND}{\textrm{RND}}
\newcommand{\renege}{\theta} %
\newcommand{\agentarr}{\lambda} %
\newcommand{\jobarr}{\mu}
\newcommand{\joinprob}{\sigma}
\newcommand{\joinprobset}{\Sigma}
\newcommand{\joinprobFLQ}{\sigma^{\textrm{FR}}}
\newcommand{\joinprobFLQR}{\sigma^{\textrm{FRfb}}}
\newcommand{\policy}{\pi}
\newcommand{\queueset}{\mathcal{Q}}
\newcommand{\SMQ}[0]{\policy^{\textrm{RND}}}
\newcommand{\FLQ}[0]{\policy^{\textrm{FR}}}
\newcommand{\SMQQ}[0]{\queueset^{\textrm{RND}}}
\newcommand{\FLQQ}[0]{\queueset^{\textrm{FR}}}
\newcommand{\FLQR}[0]{\policy^{\textrm{FRfb}}}
\newcommand{\agentset}{\mathcal{I}}
\newcommand{\loc}{\mathcal{J}}
\newcommand{\numLoc}{J}
\newcommand{\numAgent}{I}
\renewcommand{\th}{^{\mathrm{th}}}
\newcommand{\orderedqueue}{\rho}
\newcommand{\RCR}{\textrm{FRfb}}
\newcommand{\ACR}{\textrm{FR}}
\newcommand{\supt}{^{(t)}}
\newcommand{\supZero}{^{(0)}}
\newcommand{\setZ}{\mathbb{Z}}
\newcites{ECA}{References}
\renewcommand{\be}{\begin{equs}}
\renewcommand{\ee}{\end{equs}}
\pgfplotsset{compat=1.12} %
\begin{document}

\RUNAUTHOR{Yan et al.} %

\RUNTITLE{Matching Queues, Flexibility and Incentives}

\TITLE{Matching Queues, Flexibility and Incentives}

\ARTICLEAUTHORS{%
\AUTHOR{Chiwei Yan}
\AFF{Department of Industrial Engineering and Operations Research, University of California, Berkeley \EMAIL{}}
\AUTHOR{Francisco Castro}
\AFF{Decisions, Operations and Technology Management, UCLA Anderson School of Management  \EMAIL{}} %
\AUTHOR{Peter Frazier}
\AFF{Operations Research and Information Engineering, Cornell University\EMAIL{}}
\AUTHOR{Hongyao Ma}
\AFF{Decision, Risk, and Operations, Columbia Business School  \EMAIL{}}
\AUTHOR{Hamid Nazerzadeh}
\AFF{Uber Technologies, Inc. \EMAIL{}}
} %

\ABSTRACT{%

\noindent\textbf{\emph{Problem definition}}: In many matching markets, some agents are fully flexible, while others only accept a subset of jobs. For example, ridesharing drivers can specify on the platform the destinations they are willing to accept. Conventional wisdom suggests reserving flexible agents, but this can backfire: anticipating higher matching chances, agents may misreport as specialized, reducing overall matches. We ask how platforms can design simple matching policies that remain effective when agents act strategically.
\textbf{\emph{Methodology/results}}: We model job allocation as a bipartite matching queueing system and analyze equilibrium throughput performance under different policies when agents choose which queue to join. We show that flexibility reservation is optimal under full information but can perform poorly with private information, sometimes substantially worse than random assignment. To address this, we propose a new policy---\emph{flexibility reservation with fallback}---that guarantees robust performance across settings, without requiring precise knowledge of system parameters or agent utility functions.
\textbf{\emph{Managerial implications}}: Our results underscore the importance of accounting for strategic reporting in the design of matching policies: the proposed fallback policy both preserves flexibility and exploits latent flexibility when explicitly flexible agents are exhausted. Its simplicity and parameter-free nature also make it practical to implement in platforms such as ridesharing and affordable housing allocation.

}%

\KEYWORDS{online platforms, flexibility, matching queues, ridesharing.
} 
\HISTORY{First version, 06/2020. Revision, 01/2024, 08/2025, 01/2026.}

\maketitle

\section{Introduction} \label{sec:introduction}

Matching markets play a critical role in business and society, coordinating the interaction of demand and supply. %
A central challenge that emerges in designing matching policies is to account for the private, heterogeneous preferences of strategic agents while maintaining operational efficiency. %
In a typical matching market, agents %
need to be %
matched with jobs to which they are compatible.
Some agents are highly flexible
and can accept most jobs, while others may be more specialized and can only be matched with a reduced subset.
However, if matching policies fail to balance opportunities appropriately, flexible agents may have incentives to misreport their type. Such behavior can lead to an effective loss in flexibility and ultimately reduce system efficiency.
The following examples illustrate how this dynamic can hurt matching performance and emphasize the importance of designing matching policies that align agents’ incentives with system objectives.

\smallskip
\textbf{Motivating Applications. } 
Ridesharing platforms such as Uber and Lyft,  
offer drivers %
the option to specify a destination area toward which they are willing to take trips. \Cref{fig:uber_1} and \Cref{fig:uber_2} illustrate how a driver specifies a desired destination area using Uber's ``driver destinations'' feature, opting out of trips heading the opposite directions \citep{destination_uber}. %
For drivers who need to head home or towards social or vocational obligations, 
this feature allows them to continue to earn on the platform instead of going offline. 
This provides substantial benefits for both sides: drivers earn more, and a larger driver pool leads to shorter pick-up times and better reliability for riders.

\begin{figure}[t!]
\centering
\begin{subfigure}[b]{0.31\textwidth}
    \centering
    \includegraphics[width=0.7\linewidth]{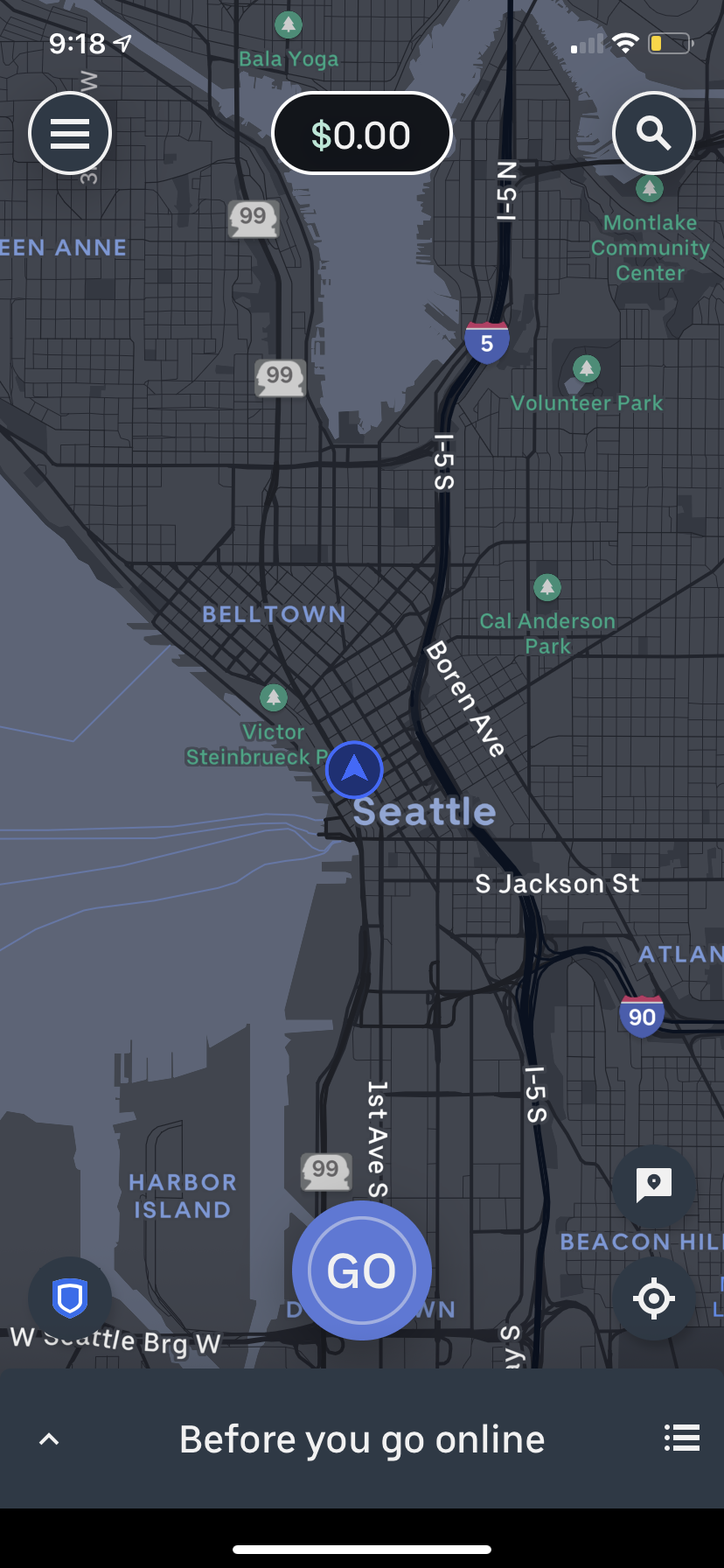}
    \caption{A driver turns online}
    \label{fig:uber_1}
\end{subfigure}
~
\begin{subfigure}[b]{0.31\textwidth}
  \centering
  \includegraphics[width=0.7\linewidth]{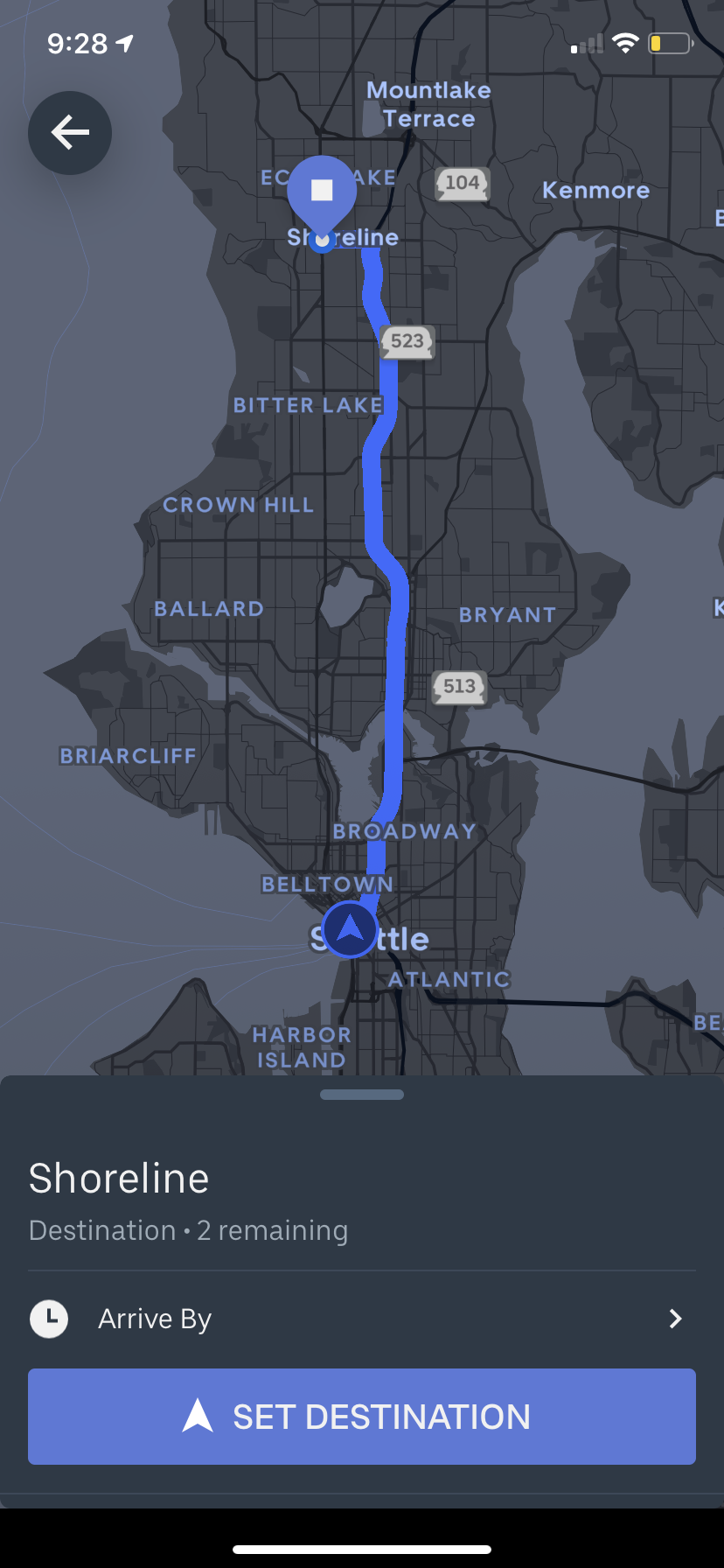}
  \caption{A driver sets her destination}
  \label{fig:uber_2}
\end{subfigure}
~
\begin{subfigure}[b]{0.31\textwidth}
  \centering
  \includegraphics[width=0.745\linewidth]{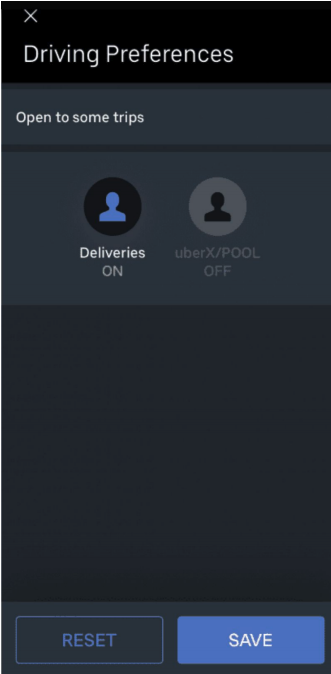}
  \caption{A driver sets job preference}
  \label{fig:uber_3}
\end{subfigure}
\caption{Uber's driver preferences feature. \normalfont{\Cref{fig:uber_1} is what a driver sees when she is online in downtown Seattle; \Cref{fig:uber_2} shows that the driver is using the destination product to accept trips toward Shoreline, a residential area in the north of Seattle; \Cref{fig:uber_3} shows the selection of a driver who chooses to take UberEats (food delivery) jobs but not Uber rides jobs.}}
\label{fig:driver_app}
\end{figure}

If the platform knew exactly drivers' flexibility over trip destinations, conventional wisdom would implement a matching policy that reserves flexible drivers as much as possible. 
In other words, a trip toward a specific destination would first be dispatched to a specialized driver who could only serve that destination. %
If there is no such driver, the trip would then be dispatched to flexible drivers who can serve more destinations. 
As a consequence of such %
reservation of flexible supply, if specialized drivers experience shorter waiting times to receive a request and have a better matching likelihood, flexible drivers may be incentivized to pretend to be specialized.
Such strategic behavior results in a ``loss of flexibility,'' and could %
degrade overall system performance relative to not reserving flexibility, casting doubt on the benefits of introducing products such as Uber's ``driver destinations" feature.

For example, in mid-2017, as part of a major driver campaign, Uber revised its destination feature, raising the daily limit from two to six uses per driver \citep{uber180}.
This increase allowed some drivers to use the feature more strategically \citep{Campbell2020Arrival}---for instance, to secure trips toward an airport%
---resulting in longer waiting and pick-up times for customers, and ultimately reducing the platform’s overall reliability \citep{Campbell2017}.
In response, Uber adjusted its matching system and reverted the daily limit from six back to two \citep{Etherington.2017}. To further mitigate this problem, the platform now advises drivers on its website that ``when a lot of drivers in a specific area set a destination, it limits the number of drivers that are able to accept trip requests from all riders. This causes longer wait times for both drivers and riders'' \citep{UberHelpWhyCantISetDestination}.\footnote{Lyft has similarly reduced its destination feature limit from six to two \citep{Cradeur2020LyftDestinationMode}.}

As a second application, consider gig-economy workers opting in for different types of jobs on the same platform. For example, Uber's driver partners can decide to receive UberEats (on-demand food delivery) jobs, or Uber rides jobs, or both (see \Cref{fig:uber_3}). %
Drivers often are capable of doing both types of jobs, though private information, such as whether a driver is accompanied by a friend in her car on a particular day, may prevent them from doing a certain job. To ensure a higher service level, it is preferable if the flexible drivers make themselves dispatchable for both types of jobs. However, flexible drivers may have an incentive to restrict themselves to only one type of job if doing so gives them higher priority for those jobs and reduces their waiting times. We refer interested readers to popular posts from online forums (see e.g., \citealt{UberXEat31:online, uberpeopledest,Separate16:online}) where experienced drivers discuss strategies of misreporting their flexibility to gain priority for certain jobs. %

The use of monetary incentives can potentially alleviate these strategic concerns, but in many practical settings, monetary incentives  %
are restricted by regulation, or business constraints. For example, in 2019 Uber experimented with a 30\% earning penalty on drivers using destination mode, and evenly distributed these reduced earnings to drivers not on destination mode \citep{uber30}. The driver feedback was negative (see driver discussions in \citealt{uber30driver}). 
In addition, paying drivers differently based on their levels of flexibility does not align with the changing regulatory environment 
(see \citealt{CaliforniaAB5} and \citealt{CaliforniaProp22Official}). Ridesharing platforms are moving towards giving driver partners more flexibility and transparency, including sharing trip information upfront, as well as providing drivers the options to accept or decline any trips \emph{without} any penalties.

A similar flexible-specialized structure arises in the public housing sector. However, the allocation mechanism in this setting creates inefficiencies that differ from those in the gig-economy platforms previously mentioned. %
Consider affordable housing allocation %
platforms such as NYC Housing Connect,\footnote{\url{https://housingconnect.nyc.gov/PublicWeb/about-us}, accessed on 08/11/2025.} a portal that prospective homeowners or renters can use to find housing. %
When a prospective homeowner (or renter) creates an account, they must declare whether they want to be considered for resales---existing units that become available when someone moves out---or re-rentals. After this, they can select listings of new units that match their preferences and apply. Units are then allocated via a lottery. Whenever a resale unit becomes available, all agents who expressed compatibility with this type of unit, even if they also expressed interest in new units, participate in the lottery. New units are accessible only to those who expressed interest in this type of unit. Because agents can decline offers, they might have an incentive to simply report preferences for both types of units, making the allocation mechanism like random assignment. This has contributed to a documented vacancy problem\footnote{\url{https://thenyhc.org/2025/02/10/nyhc-analysis-housing-connect-re-rental-vacancy-problems/}, accessed on 08/11/2025.} and suggests that alternative allocation rules, particularly those that make better use of flexible agents, could possibly yield more successful allocations.

\smallskip

\textbf{Research Questions.} 
The examples above highlight important problems that platforms face when designing matching policies: managing flexibility efficiently while accounting for agents’ potential strategic misreporting under private information. These challenges are further compounded by the fact that matching markets evolve dynamically over time, which calls for policies that perform robustly across diverse market conditions. In this context, we ask: (i) What principles should guide the design of matching policies under private information? (ii) How can we mitigate the loss of flexibility that arises from strategic behavior? (iii) Can we develop a simple and implementable, robust policy that performs well relative to natural benchmarks across settings?

\smallskip

\textbf{Model. } %
We study a game-theoretic queueing model, where jobs (e.g., trips in the context of ridesharing) and agents (e.g., drivers) arrive over time according to general arrival processes. 
Each job is associated with one of a finite number of types (e.g., %
trip destinations, or delivery jobs versus ride jobs), and an agent can be of the flexible type---compatible with all job types---or of the specialized type---compatible with a single job type.
The platform organizes a set of queues. Agent type is private information and agents report (or misreport) their types by joining the queue that maximizes their utility in equilibrium. 
Agents have exponentially distributed patience levels, and abandon the system if they have not been matched %
when their patience is exhausted. 
Jobs are assigned to agents in queues %
and agents are able to see the types of the jobs dispatched to them, and are %
\emph{free} to decline jobs without penalty.
However, jobs have \emph{limited patience} for repeated rejections and may be lost after excessive rejections. We study the impact of matching policies on the system's throughput, i.e., the number of matches per unit of time. %

\smallskip

\textbf{Contributions.} We now summarize our key contributions. %
\begin{itemize}
    \item We first confirm our intuition by showing that %
    when agents' types are known, the {\em full-information first-best} matching policy for throughput is achieved by reserving flexibility whenever possible (\Cref{prop:fb}) as flexible agents are more ``valuable'' for future matches. We call the optimal policy satisfying this property the flexibility reservation (FR) policy. %

    \item When agents are strategic and their types are private information, we show that such flexibility reservation may lead to longer waiting times and reduced matching likelihood for flexible agents. This incentivizes them to under-report the set of jobs they are compatible with, and this loss of flexibility may lower the system throughput. %
    In \Cref{thm:flip1-alt}, we show that the equilibrium throughput under the flexibility reservation policy can become \emph{substantially} bad compared to a compatibility-agnostic random matching policy (RND)---the effective baseline policy in the ridesharing and affordable housing examples.

    \item  %
    To balance matching efficiency with agents' strategic considerations, we propose a new policy dubbed {\em flexibility reservation with fallback} (FRfb). Intuitively, the FRfb policy retains a similar structure to the FR policy (i.e., reserving more flexible agents when possible), but offers additional {\em seemingly incompatible edges} along which jobs can be dispatched. In particular, when there is \emph{no} available compatible agents to match, a job will be sent to a pool of seemingly incompatible agents since some of them might be under-reporting and are, in fact, compatible with the job. 
    In contrast to the potential fragility of the FR policy, we show that, %
    the proposed FRfb policy enjoys \emph{robust} performance improvement. In particular, under any market conditions, and regardless of the strategy profile taken by the flexible agents, the FRfb policy \emph{always} achieves higher throughput than the random policy (\Cref{thm:robust_improvement_FLQR}). %
    We further demonstrate its performance via extensive simulations over general compatibility graphs. This robust performance guarantee, along with its parameter-free nature, makes our FRfb policy easy to implement in practice. 
    In particular, we illustrate how this policy is implemented in the driver destination product of a major ridesharing platform.

\end{itemize}

\smallskip

\textbf{Organization of the paper.} In \Cref{sec:lit}, we discuss related work. In 
\Cref{sec:preliminaries}, we introduce the main model elements. In \Cref{sec:fr}, we study the full-information optimal matching policy and benchmark its performance in the strategic setting.
In \Cref{sec:flq_with_recourse}, we propose and analyze the flexibility reservation policy with fallback. We conduct extensive simulation experiments in \Cref{sec:sim_experiments} to demonstrate the performance of various policies. We conclude and discuss a real-world implementation in \Cref{sec:conclusion}. %
We also extend our results to general compatibility graphs in Appendix \ref{sec:general_compatibility_graph} where each agent type can fulfill a certain subset of job types. All proofs and auxiliary technical results are presented in the online supplement.

\section{Literature Review} \label{sec:lit}

In this section, we review related work, focusing on three main streams: performance evaluation of skill-based queueing systems, queueing games, and non-monetary mechanism design.

\textbf{Skill-based queueing systems.} Our queueing-theoretic modeling framework generally falls under the skill-based server models. In these models, servers are flexible in the types of customers they can serve, and customers are flexible in the servers at which they can be processed. These models are motivated by settings such as call centers, where service representatives may speak different languages. Under the first-come first-served (FCFS) service discipline, various variants of the model exhibit product-form steady-state distributions. Notable works include \cite{adan2009exact}, \cite{visschers2012product} and \cite{adan2014skill}. We also refer readers to a recent excellent overview by \cite{gardner2020product} which synthesizes various related technical results in the field. Our modeling framework differs from this stream of literature in that we focus on random queueing discipline instead of FCFS, motivated by quality-driven dispatch used in ridesharing platforms as well as lotteries used in allocating affordable housing. As a consequence, existing product-form results do not apply (see, e.g., \citealt{castro2020matching} for a setting similar to ours but under FCFS). Importantly, our focus is on designing a performant and strategically robust dispatching policy, while this stream of work is mainly concerned with performance evaluation under particular dispatching policies.

\smallskip

\textbf{Queueing games.} Our work also belongs to the literature on managing queues with strategic agents. In his pioneering work, \cite{naor1969regulation} shows that agents' selfish joining behaviors can lead to inefficient system outcomes. 
He argues that to address such inefficiencies, the system provider can rely on monetary transfers to restore first-best outcome. 
Related to our setting, the role of monetary transfers have also been studied in strategic queues with priority. There are two perspectives about this type of queues. First, there are works that analyze priority schemes within queues, see e.g., \cite{kleinrock1967optimum}, \cite{dolan1978incentive},
\cite{hassin1995decentralized}, \cite{afeche2004pricing} and \cite{yang2017trading}. Essentially, in this line of research, customers pay an amount to participate in an auction that determines (or trades) their position (priority) in the queue. The second perspective, which is more aligned with our work, corresponds to priorities between classes of customers (see, e.g., \citealt{cobham1954priority}). 
In this type of model, there is a single server who serves customers with higher priority first (preemptive or non-preemptive regimes can be accommodated), and then continues with customers of lower priorities. In the present work, motivated by regulatory and business constraints outlined in \Cref{sec:introduction}, we do not consider designing monetary incentives. As a consequence, attaining the most efficient outcome might no longer be possible.

In non-monetary settings, there are problems where strategic agents have diverse options when joining a system, and the system provider is constrained to use matching or scheduling policies to optimize the system performance. One of the main challenges in managing these systems is that matching policies alone are, typically, not enough to incentivize agents to make system-wide optimal choices.  \citet{parlakturk2004self}, for example, consider a system with two queues in which strategic agents choose to start their service in one of the queues (and then continue on the other) to minimize their sojourn time.
Their focus is 
on the design of state-dependent scheduling rules
of server resources that have a performance
close to the first-best.
In the context of call centers, \cite{armony2004customer} study a model in which heterogeneous, time-sensitive customers strategically choose between two modes of service (online or callback option). The system's manager 
designs how to allocate servers between the two modes to maximize quality of service subject to certain operational constraints; see \cite{hassin2009equilibrium} for a related framework with two queues. The key features that make our study distinct from these previous works are that in our setting the number of queues is a design choice, not all servers are equal and there is an underlying compatibility graph. Along this line, \citet{caldentey2025designing} recently study service menu design---that is, the design of the underlying compatibility graph---to balance matching rewards and waiting times under the FCFS-ALIS\footnote{The FCFS-ALIS discipline, short for ``first come first served–assign longest idle server,'' operates as follows. When a server becomes available, it is assigned to the longest-waiting customer among those it is compatible with. Conversely, when a customer can be served by multiple idle servers, the customer is matched to the server that has been idle for the longest time.} service discipline, when heterogeneous customers strategically choose which queue to join to maximize expected utility. In contrast, our work focuses on the design of strategically robust, state-dependent matching policies, rather than network design under a fixed FCFS-ALIS discipline.

Finally, our work relates to the literature on allocating heterogeneous items to agents who sequentially accept or reject offers. In the context of kidney exchange, \citet{su2006recipient} study a related market design, but their model admits a queue decomposition because the matching policies are not state-dependent. In contrast, our state-dependent policies intrinsically couple the queues. \citet{su2004patient} examine how queueing disciplines affect patient acceptance and allocation efficiency, showing that LCFS and modified FCFS rules can mitigate rejection. In a ridesharing context, \citet{castro2021randomized} propose randomized FCFS disciplines to reduce cherry-picking in airport queues. %
Unlike these works, we fix the queueing discipline and study the impact of matching policies. Finally, \citet{leshno2019dynamic} study waiting lists with infinitely patient agents and focus on welfare losses from long waits and agent-item mismatch, whereas our setting features short-lived agents and emphasizes throughput maximization.

\smallskip

\textbf{Non-monetary mechanism design.} 
In addition to the queueing literature above, there is also a stream of work 
stemming from scheduling and algorithmic mechanism design 
that considers systems with strategic agents and no monetary transfers. In the prototypical setting, strategic agents 
decide in which  shared resource or machine to complete a task to minimize its finish time (see e.g., \citealt{koutsoupias1999worst}).
The common theme of these works is to analyze the resulting equilibrium and, in some cases, the price of anarchy. For examples, \cite{christodoulou2004coordination} consider the case of coordination mechanisms, \cite{ashlagi2010competing} study  a setting with competing schedulers, \cite{ashlagi2013equilibria} consider a related problem in a queueing setting, and 
\cite{koutsoupias2014scheduling} 
studies a setting in which machines can lie about the time it takes them to complete a task. 
A main difference of our work with the aforementioned papers is that, in our setting, %
tasks have different types which makes them compatible only with a subset of machines (agents).
There are other works, however, that do incorporate a compatibility graph among agents and jobs.
For example, 
\citet{dughmi2010truthful}
consider a version of the generalized assignment problem in which agents can misreport their compatibility with tasks (or, more generally, their value for tasks) 
for a given matching design. 
Motivated by kidney exchange programs,
\citet{ashlagi2015mix} consider the problem of designing a matching mechanism that makes it incentive compatible for the hospital to reveal its compatibility graph. 
Like this past work, we consider the design of matching policies without payments. Unlike this past work, our matching policies critically depend on the dynamically evolving state of the system.

\section{Model and Preliminaries} \label{sec:preliminaries}

We consider a platform that dynamically matches agents and jobs over time. Each agent and each job is associated with a type. We let $\agentset=\{0,1,\dots,\numAgent\}$ and  $\loc = \{0, 1, \ldots, \numLoc\}$ denote the set of agent types and job types, respectively, with $\numAgent=\numLoc$. There is an underlying bipartite compatibility graph between agent types and job types. It has multiple specialized agent types who can perform exactly \emph{one} type of job, and one flexible agent type capable of performing \emph{all} job types. In particular, agents of type \(0\) are \emph{flexible} and can serve any job type in \(\loc\), whereas agents of type \(i \in \{1,2,\cdots,\numAgent\}\) are \emph{specialized} and can serve \emph{only} type \(j=i\) jobs (see \Cref{fig:compatibility_graph_nested}). Type~$0$ jobs can be served \emph{only} by flexible (type~$0$) agents.\footnote{Extensions to the more general setting, where each agent type can serve an arbitrary subset of job types, are discussed in Appendix \ref{appx:fb}.}

This framework is motivated by the examples discussed in \Cref{sec:introduction}. Flexible agents include drivers who can serve any location, handle both passengers and deliveries, or prospective homeowners willing to accept both new and previously owned units. Specialized agents, by contrast, are drivers restricted to a specific location or limited to either passenger or delivery services, or homeowners seeking only new units. The compatibility structure also covers scenarios in which certain jobs (type 0) can be performed only by flexible agents. For example, in a ridesharing setting, job types may correspond to group sizes (e.g., one to six riders) and agent types to vehicle capacities (e.g., UberX or UberXL). Only UberXL drivers---serving as flexible agents in this context---can accommodate groups of five or six riders.

\begin{figure}[htbp]
\centering
    \scalebox{0.65}{\begin{tikzpicture}[baseline=0pt]

\node at (-10.5-2.9, 4.0){\large \textbf{Flexible Agents}};

\def\sf{-2}

\draw[line width=0.6mm] (-0,4) circle (5.5mm);
\draw[line width=0.6mm] (-0,2) circle (5.5mm);
\draw[line width=0.6mm] (-0,-1) circle (5.5mm);

\draw[fill=gray, gray, opacity=0.2] (-0,4) circle (5.5mm);
\draw[fill=gray, gray, opacity=0.2] (-0,2) circle (5.5mm);
\draw[fill=gray, gray, opacity=0.2] (-0,-1) circle (5.5mm);

\draw[line width=0.5mm] (-0-7.5,4) circle (5.5mm);
\draw[line width=0.5mm] (-0-7.5,2) circle (5.5mm);
\draw[line width=0.5mm] (-0-7.5,-1) circle (5.5mm);

\draw[fill=gray, gray, opacity=0.2] (-0-7.5,4) circle (5.5mm);
\draw[fill=gray, gray, opacity=0.2] (-0-7.5,2) circle (5.5mm);
\draw[fill=gray, gray, opacity=0.2] (-0-7.5,-1) circle (5.5mm);

\draw[line width=0.5mm] (-0-6.5,4)--(-1,4);
\draw[line width=0.5mm] (-0-6.5,2)--(-1,2);
\draw[line width=0.5mm] (-0-6.5,-1)--(-1,-1);

\draw[line width=0.5mm] (-0-6.5,4)--(-1,-1);
\draw[line width=0.5mm] (-0-6.5,4)--(-1,2);

\node at (4.5,1.5){\large \textbf{Jobs}};

\def\sf{2}
\draw node at (1.0+\sf-1,4) {\large type $0$};
\draw node at (1.0+\sf-1,2) {\large type $1$};
\draw node at (1.0+\sf-1,-1) {\large type $J$};

\node at (0,0.5){\Huge $\vdots$};

\node at (-7.5,0.5){\Huge $\vdots$};

\draw node at(-10.25+0.8,-1) {\large type $J$};
\draw node at(-10.25+0.8,4) {\large type $0$};

\draw node at(-10.25+0.8,2) {\large type $1$};

\draw [decorate,decoration={brace,amplitude=10pt},xshift=-4pt,yshift=0pt]
(-10.5,-1) -- (-10.5,2) node [black,midway,xshift=-3.0cm] 
{\large \textbf{Specialized Agents}};

\end{tikzpicture}}
    \vspace{3mm}\caption{A compatibility graph where there is one flexible agent (type 0) that can serve all job types, and the rest of the agents are specialized and can only serve their corresponding type of jobs.}
\label{fig:compatibility_graph_nested}
\end{figure}

Jobs and agents arrive according to general counting processes, with inter-arrival times that need \emph{not} be independent or identically distributed. Let \( (\lambda_i)_{i \in \agentset} \) denote the long-run average arrival rates for each agent type, and \( (\mu_j)_{j \in \loc} \) the corresponding rates for each job type. Agents have exponentially distributed sojourn times and will abandon the platform at a rate $\renege > 0$ if unmatched. Each agent's type is private information, known only to the agent and unobservable to the platform. However, the detailed arrival processes for all job and agent types are public information, known to both the platform and the agents.

Upon arrival, a job is offered in a uniformly random order, without replacement (effectively instantaneously) to a selected pool of agents (we formally introduce the platform policy in the next section), who either accept and leave the system or decline and continue waiting. However, we assume that jobs have limited patience for rejection; that is, the job will be lost after too many rejections. In particular, after each rejection by an agent, a job survives with probability $p \in [0,1]$.

\subsection{Matching Policy}\label{subsec:matching}

A matching policy uses priority queues to determine how incoming jobs are assigned to available agents.
Without the ability to directly match jobs to specific agent types, the platform instead maintains a set of queues and dispatches jobs to these queues based on a job type-dependent priority list. We allow two (or more) queues to have the same priority for a given job type.  When a job is dispatched to a queue (or multiple queues with the same priority), we assume that it will be assigned uniformly at random among agents in that queue (or those queues). 
Within a queue (or across multiple queues), this corresponds to a lottery, as in the case of affordable housing allocation. In the case of ridesharing, it approximates the closest-dispatch or first-dispatch protocol widely used by ridesharing platforms, which dispatches a rider request to its closest driver (see, e.g., \citealt{castillo2017surge}, \citealt{yan2019dynamic}, and \citealt{besbes2018spatial}).

Upon arrival, agents strategically choose which queue to join (they are not required to join the queue associated with their own type). Agents are told the type of job offered and may choose to decline it, at which point the platform dispatches to another agent. The platform makes offers following the order in the priority list until all queues in the list have been exhausted or the job has been lost. The next section introduces agents' strategies.

More formally, a matching policy $\policy \triangleq (\queueset,\rho)$ consists of a set of queues $\queueset$ and a tuple $\rho = (\rho_0,\dots,\rho_{J})$. The set $\queueset$ can be of any size. For each job $j\in \loc$, $\rho_j$ specifies an ordered list of subsets of $\queueset$ that are mutually exclusive. A type $j$ job would first
be distributed uniformly at random without replacement to the agents in the first subset of queues in the ordered list $\rho_j$ until someone accepts it, the job is lost, or there are no more agents in the subset of queues to offer the job to.

\subsection{Agent Strategies} \label{subsec:agent}

As discussed in \Cref{sec:introduction}, agents may have an incentive to misreport their type. In our model, under a policy $(\queueset,\rho)$, this manifests as a strategic choice of queue. If job offers in a given queue lead to a low expected utility, agents may prefer to avoid that queue. Flexible agents might join a queue intended for specialized agents, while specialized agents might bypass the first queue in the priority list for their compatible job type if too many others are already competing for it. We next describe how agents make their queue-selection choice.

Let $v$ denote the value of securing any compatible job, and $c$ the per-unit-time waiting cost. An agent's expected utility is
\begin{equation*}
v \cdot \mathbb P(\text{match}) - c \cdot \mathbb E[\text{waiting time until a match or abandonment}],
\end{equation*}
where the probability and expectation are taken over the randomness of queue lengths at the time of the agent's arrival,\footnote{Note that the queue length distribution at an agent’s arrival need not coincide with the steady-state queue length distribution, since agent arrival processes may be general. The two distributions coincide when agent arrivals follow a Poisson process, due to the PASTA property~\citep{wolff1982poisson}.} as well as the randomness of future job arrivals, agent arrivals, and agent reneging events. In other words, agents cannot observe the queue lengths upon arrival and must make decisions based on these expectations. We assume $v$ to be constant across job types, interpreting it as an average over any systematic variation in actual pay or job desirability. In ridesharing, this corresponds to platforms adjusting trip prices to equalize expected earnings across trips of different lengths and destinations (see, e.g., \citealt{ma2018spatio}). In affordable housing, this reflects that the gain from securing any acceptable unit is typically large relative to the variation among them, so applicants’ strategic choices are not driven primarily by those differences.

Within the system, all agents face two strategic choices: (1) which queue in \( \queueset \) to join; and (2) whether to accept or decline a job offer upon receiving one of a specific type. We abstract from the decision of whether to join the system, focusing instead on the behavior of agents who are already active participants. For ridesharing, this considers drivers who have already logged in and are deciding how to best use the destination feature or select the type of service they want to provide. In affordable housing, this considers that applicants are already in the system, driven by their housing needs. Given the assumption of constant %
job values, %
it is straightforward that it is a dominant strategy for agents to accept \emph{all compatible} jobs and decline those that are incompatible.

As a consequence, we can represent the strategy of a type \( i \in \agentset \) agent by \( \sigma_i \in [0,1]^{|\queueset|} \), where \( \sigma_{i,q} \) indicates the probability that an agent of type \( i \) joins queue \( q \in \queueset \), with the constraint that \( \sum_{q \in \queueset} \sigma_{i,q} = 1 \). %
Let $\joinprob = (\joinprob_0, \joinprob_1, \dots, \joinprob_\numAgent)$ denote the strategy profile for all agent types. It will be convenient to denote the set of all possible strategy profiles as
$\joinprobset(\queueset) \triangleq \prod_{i\in\agentset}\joinprobset_i(\queueset)$,
where $\joinprobset_i(\queueset)$ is the set of probability measures over the set of queues $\queueset$,  common to all agents $i\in \agentset$.

Each agent chooses the queue that offers the highest expected utility. Consider a strategy profile $\joinprob\in \joinprobset(\queueset)$
under some policy $\policy$. %
Let $W_{i,q}^{\policy}(\joinprob)$ denote the waiting time from arrival until being matched for a type $i \in \agentset$ agent who joins queue $q \in \queueset$, assuming she has \emph{infinite} patience (i.e., never reneges), under strategy profile $\joinprob$, while all other agents in the queues may still renege at rate~$\theta$. This is also termed \emph{virtual waiting time} in the queueing literature \citep{de1985queueing}. 
Let also $R_\theta$ be an exponential random variable with rate $\theta$. The expected utility of an agent of type $i$ who joins queue $q$ under the strategy profile $\joinprob$ is
\begin{equation*}
    u_{i,q}^\policy(\joinprob) \triangleq  v\cdot \PP[W_{i,q}^{\policy}(\joinprob)\le R_\theta] - c\cdot \E[\min\{R_\theta, W_{i,q}^{\policy}(\joinprob)\}],
\end{equation*}
where, again, the probability and expectation are taken over the randomness of queue lengths at the time of the agent's arrival, as well as the randomness of future job arrivals, agent arrivals, and agent reneging events. Note that because $R_\theta$ is an exponential random variable, the probability above equals $\E[e^{-\theta W_{i,q}^{\policy}(\joinprob)}]$ and the expectation is equal to $(1-\E[e^{-\theta W_{i,q}^{\policy}(\joinprob)}])/\theta$. In turn, agents effectively choose the queue that maximizes their probability of being matched before their patience expires, \emph{regardless} of the values of $v$ and $c$.%

\smallskip

We say that $\joinprob$ forms a \emph{Nash equilibrium} if and only if for all $i\in\agentset$ and all $q \in \queueset$: 
\begin{equation} \label{eq:equilibrium} 
    \joinprob_{i,q}>0\quad \: \implies\: \quad u_{i,q}^\policy(\joinprob) \geq u_{i,q'}^\policy(\joinprob),\quad \forall q'\in \queueset. \tag{NE}
\end{equation}
Intuitively, \eqref{eq:equilibrium} requires that if type $i$ agents join queue $q$ with non-zero probability, the expected utility from joining queue $q$ must be the highest among any queue.

The following result guarantees the existence of a Nash equilibrium for any policy and model primitives. %
Its proof consists of writing \eqref{eq:equilibrium} as an equivalent variational inequality. We then cast it as a fixed-point equation, which is guaranteed to have a solution in virtue of Brouwer's fixed-point theorem. 

\begin{proposition}[Existence of Nash Equilibrium] \label{lm:existence_eq}
For any matching policy $\policy$ such that the utility $u_{i,q}^{\policy}(\joinprob)$ is continuous in agents' strategy profile $\joinprob$ for all $i\in\agentset$ and all $q\in\queueset$, 
there exists a strategy profile that forms a Nash equilibrium.%
\end{proposition}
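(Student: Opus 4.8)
The plan is to recast the equilibrium condition \eqref{eq:equilibrium} as a finite-dimensional variational inequality over the (compact, convex) strategy set and then obtain a solution via Brouwer's fixed-point theorem, in the form of the Hartman--Stampacchia existence theorem. First I would fix the policy $\policy$ and package the waiting times into a single vector field. Index coordinates by pairs $(i,q)$ with $i\in\loc$ and $q\in\queueset$, and set $N=(\numLoc+1)|\queueset|$. The strategy set $\joinprobset(\queueset)=\prod_{i\in\loc}\joinprobset_i(\queueset)$ is a product of probability simplices, hence a nonempty, compact, convex subset of $\mathbb{R}^{N}$. Define $F:\joinprobset(\queueset)\to\mathbb{R}^{N}$ by $F(\joinprob)_{iq}=\E[W_{iq}^{\policy}(\joinprob)]$; by the hypothesis of the proposition each component is continuous in $\joinprob$, so $F$ is continuous.

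Next I would show that $\joinprob^{*}\in\joinprobset(\queueset)$ satisfies \eqref{eq:equilibrium} if and only if it solves the variational inequality
\[
\langle F(\joinprob^{*}),\,\joinprob-\joinprob^{*}\rangle
=\sum_{i\in\loc}\sum_{q\in\queueset}\bigl(\joinprob_{iq}-\joinprob^{*}_{iq}\bigr)\,\E[W_{iq}^{\policy}(\joinprob^{*})]\;\ge\;0
\qquad\text{for all }\joinprob\in\joinprobset(\queueset).
\]
Because $\joinprobset(\queueset)$ is a product set and the bracketed term is linear in $\joinprob$, this inequality decouples across agent types: it is equivalent to requiring, for each $i\in\loc$, that $\joinprob^{*}_i$ minimize the linear functional $\tau_i\mapsto\sum_{q\in\queueset}\tau_{iq}\,\E[W_{iq}^{\policy}(\joinprob^{*})]$ over the simplex $\joinprobset_i(\queueset)$. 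A point of the simplex minimizes a linear functional exactly when its support is contained in the set of minimizing coordinates, which is precisely the statement that $\joinprob^{*}_{iq}>0$ implies $\E[W_{iq}^{\policy}(\joinprob^{*})]\le\E[W_{iq'}^{\policy}(\joinprob^{*})]$ for all $q'\in\queueset$, i.e.\ \eqref{eq:equilibrium}.

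Finally I would establish existence of a solution to this variational inequality. Let $\mathrm{Proj}$ denote Euclidean projection onto the nonempty closed convex set $\joinprobset(\queueset)$, and define $G:\joinprobset(\queueset)\to\joinprobset(\queueset)$ by $G(\joinprob)=\mathrm{Proj}\bigl(\joinprob-F(\joinprob)\bigr)$. Projection onto a convex set is (nonexpansive, hence) continuous, and $F$ is continuous, so $G$ is a continuous self-map of the nonempty compact convex set $\joinprobset(\queueset)$; Brouwer's fixed-point theorem yields a point $\joinprob^{*}$ with $G(\joinprob^{*})=\joinprob^{*}$. By the obtuse-angle characterization of the projection, $\joinprob^{*}=\mathrm{Proj}(\joinprob^{*}-F(\joinprob^{*}))$ means $\langle(\joinprob^{*}-F(\joinprob^{*}))-\joinprob^{*},\,\joinprob-\joinprob^{*}\rangle\le0$ for all $\joinprob\in\joinprobset(\queueset)$, i.e.\ $\langle F(\joinprob^{*}),\joinprob-\joinprob^{*}\rangle\ge0$; thus $\joinprob^{*}$ solves the variational inequality and, by the equivalence above, is a Nash equilibrium. (Alternatively one could apply Kakutani directly to the best-response correspondence $\joinprob\mapsto\prod_i\arg\min_{\tau_i\in\joinprobset_i(\queueset)}\sum_q\tau_{iq}\E[W_{iq}^{\policy}(\joinprob)]$, whose values are nonempty, convex, compact and which is upper hemicontinuous by Berge's theorem.) There is no deep obstacle here: the continuity hypothesis does all the heavy lifting, and the one point requiring care is the equivalence step of the second paragraph, which hinges on the fact that a type's objective is linear in its own mixed strategy over a simplex. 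The genuinely substantive work — verifying that $\E[W_{iq}^{\policy}(\joinprob)]$ is indeed continuous in $\joinprob$ for the concrete policies (\ACR, \RCR) — is handled separately.
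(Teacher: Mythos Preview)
Your proposal is correct and follows essentially the same route as the paper: recast \eqref{eq:equilibrium} as a variational inequality over the product simplex, then obtain a solution by applying Brouwer to the map $\joinprob\mapsto \mathrm{Proj}(\joinprob-F(\joinprob))$. Your write-up is in fact more explicit than the paper's on two points---the decoupling of the VI across agent types (linear objective over a simplex), and the obtuse-angle step that converts a fixed point of the projected map back into a VI solution---whereas the paper simply asserts these and cites \citet{smith1979existence} for the template.
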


\subsection{Platform's Objective}\label{sec:plt_obj}

Our goal is to investigate how various policies affect the system's performance in terms of {\em throughput}, i.e., the long-run average number of matches. For a policy $\policy$ and strategy profile $\joinprob$, let $M^{\policy}(T;\joinprob)$ denote the total number of matches made up to time $T > 0$, and let $\E[M^{\policy}(T;\joinprob)|\vecAgent^{(0)}=\vecAgent]$ be the expected number of matches up to time $T$ given that the initial number of agents on the platform is $\vecAgent \in \setZ_{\geq 0}^{\numAgent+1}$. The expectation is taken over all the randomness in the system, i.e., the arrival of agents and jobs, agents' abandonment, jobs' chance of loss, and the policy $\policy$. The throughput under a policy $\policy$ and strategy profile $\joinprob$, $\texttt{TP}(\policy;\joinprob)$, is:
\begin{equation}
\label{eq:throughput_defn}
    \texttt{TP}(\policy;\joinprob)=\limsup_{T\rightarrow\infty}\frac{1}{T}\E[M^{\policy}(T;\joinprob)|\vecAgent^{(0)}=\vecAgent].
\end{equation}
Note that this long-run average is invariant to the initial condition $\vecAgent^{(0)}$. Additionally, when the strategy profile is clear from context, we omit its dependence in the notation for the number of matches and throughput, and write $M^{\policy}(T)$ and $\texttt{TP}(\policy)$.

In ridesharing, throughput approximates reliability and growth, a key proxy for both short-term and long-term product success. %
For affordable housing programs like NYC Housing Connect, throughput is also a key measure of success---they aim to place as many eligible applicants into units as possible (subject to fairness and eligibility constraints). %
Across contexts, it captures both short-term efficiency and long-term goals.

\section{Perils of Flexibility Reservation (FR)}
\label{sec:fr}

In this section, we aim to provide insights into the value of accounting for agents' strategic behavior. In a non-strategic setting, our results in this section provide guidelines for how to optimally allocate jobs to agents; while in the strategic setting, we show the perils of maintaining a non-strategic optimal policy. In particular, in \Cref{section:full_info}, we specialize our model to the full information setting, introduce the \emph{flexibility reservation} (FR) policy, and show that it maximizes throughput. Then, in \Cref{sec:private_info_fr_can_be_bad}, we introduce a baseline policy that ignores agents' compatibility---the \emph{random} (RND) policy---and show that flexibility reservation can in fact have substantially poor performance compared to this baseline policy when agents are strategic.

\subsection{Full Information} \label{section:full_info}
We begin by characterizing the throughput-maximizing policy when the platform has \emph{full information} about agent types. In this setting, agents cannot act strategically as the platform can perfectly identify their type. On the side of jobs, whenever there is a compatible agent for a given job, the job can always be matched.  

Intuitively, under full information, flexible agents are more valuable to the platform than specialized agents, as they can serve multiple job types. Accordingly, an effective matching policy should prioritize conserving flexible agents whenever possible. In turn, we consider the \emph{flexibility reservation} (FR) policy, which upon the arrival of a type~\(j\) job, first assigns it to an available specialized agent of the same type, if one exists; otherwise, it dispatches the job to an available flexible (type~\(0\)) agent. Type~\(0\) jobs are always assigned to type~\(0\) agents. \Cref{fig:aggressive-capacity-reservation} illustrates the policy. Formally, the FR policy $\policy^\ACR\triangleq(\FLQQ, \orderedqueue^\ACR)$, offers $\numAgent + 1$ queues, one for each agent type with $\FLQQ = \agentset= \{0,1,2,\cdots,\numAgent\}$, and the ordered list is specified by $\orderedqueue^\ACR_0=(\{0\})$ and $\orderedqueue^\ACR_j=(\{j\},\{0\})$ for all $j\in \loc\setminus\{0\}$. The following result formalizes this discussion. 

\begin{proposition}[First-Best: Flexibility Reservation]
\label{prop:fb}
Under full information about agent types, the flexibility reservation policy maximizes the long-run average throughput. 
\end{proposition}

In \Cref{prop:fb_general} of Appendix~\ref{appx:fb}, we present a stronger version of this statement. In particular, we show that when the platform has full information about agent types, the FR policy maximizes the long-run average throughput \emph{among all state-dependent} matching policies that assign each arriving job to an agent type based on the current system state (i.e., the number of available agents of each type) and the job's type. In the same proposition, we also provide a generalization to arbitrary compatibility graphs.

In practice, types are rarely observable, and agents act strategically. The FR policy characterizes the throughput-maximizing use of flexible capacity when types are known and therefore serves as \emph{first-best} benchmark, and we refer to its throughput as the full information first-best throughput. In ridesharing, it rationalizes dispatch rules that prioritize specialized drivers when available; %
in affordable housing, it corresponds to reserving applicants willing to accept both resale and new units. In the next sections, we quantify losses from private information and the strategic choice of queues, which can be exacerbated by naively applying FR-style policies that ignore strategic behavior. We then design policies inspired by FR that balance flexibility reservation with incentives and deliver high throughput.

\remark The proof of \Cref{prop:fb} relies on a sample path argument up to randomization in reneging. In particular, we fix a sample path of job and agent arrivals (this is why our results hold under general counting processes of agent and job arrivals), but take expectation over agent reneging. The latter is necessary. For example, in a time frame with two arrivals---a flexible agent who abandons quickly and a more patient specialized agent---plus two jobs compatible with the specialized agent, matching the specialized agent first yields a throughput of 1, whereas matching the flexible agent first can yield 2.

\subsection{On the Poor Performance of Flexibility Reservation}\label{sec:private_info_fr_can_be_bad}
The optimality of the FR policy critically relies on the platform's ability to observe agents' true types. As mentioned, however, in practice this information is typically private to the agents.
Consider \Cref{fig:aggressive-capacity-reservation}, which illustrates the scenario under the FR policy described in \Cref{prop:fb}. %
Under this policy, flexible agents may have an incentive to misreport themselves as specialized agents (i.e., $\sigma_{0,0}<1$) to increase their matching likelihood and reduce their waiting times, since jobs are prioritized for assignment to the corresponding specialized agents first. This loss of flexibility can reduce overall system throughput, as fewer flexible agents remain available to serve as a buffer for some jobs that, at certain times, lack corresponding specialized or flexible agents.

To quantify potential throughput losses, we compare the FR policy to a baseline compatibility-agnostic policy that ignores type information. Upon each job arrival, the policy offers the job sequentially uniformly at random (without replacement) to agents in a \emph{single} queue, until a compatible agent accepts. We refer to this as the {\it random} (RND) policy $\pi^{\textrm{RND}} \triangleq (\queueset^{\textrm{RND}}, \rho^{\textrm{RND}})$ which offers a single queue ($\SMQQ= \{0\}$), and the ordered lists contain only queue $0$ for all types of jobs, i.e., $\orderedqueue^\RND_j=(\{0\})$ for all $j \in \loc$. %
We use it to assess the value---and potential downsides---of policies that attempt to exploit type information; additional benchmark policies are studied in \Cref{sec:sim_experiments}. 

Operationally, the RND policy eliminates both type-based prioritization and agents’ ability to (mis)report types: there is a single queue and jobs are offered at random within the waiting pool. In ridesharing, this corresponds to not offering a driver-destination feature and not allowing drivers to pre-declare whether they accept rides, deliveries, or both, yielding simply the closest-dispatch protocol that can be well approximated with random under uniform spatial demand. 
In affordable housing, lottery mechanisms are analogous: resale and new units are offered via lotteries among all applicants who expressed compatibility with those units (effectively a random allocation in equilibrium cf. \Cref{sec:introduction}). In both cases, the random policy provides a natural baseline against which we can quantify how private information and strategic queue choice affect throughput.

It is immediate that $\policy^{\textrm{RND}}$ is both strategy-proof and strategy-independent, since $\sigma_{i,0}=1$ for all $i \in \agentset$ is the only feasible strategy when there is a single queue.
Our next result shows that there exists an instance such that, at equilibrium, $\policy^\ACR$ can achieve \emph{substantially worse} throughput than $\pi^{\textrm{RND}}$.%

\begin{figure}
\centering
  \resizebox{0.96\textwidth}{!}{%
    \begin{tikzpicture}[baseline=0pt]

\node at (-10.5-3.3,4.0){\large \textbf{Flexible Agents}};
\draw[line width=0.6mm] (-7,4.5)--(-4,4.5);
\draw[line width=0.6mm] (-7,3.5)--(-4,3.5);
\draw[line width=0.6mm] (-4,3.5)--(-4,4.5);

\draw[line width=0.6mm,-] (-3.6,4.0)--(-1,4.0);
\draw[line width=0.6mm,->] (-1,4.0)--(-2.3,4.0);
\draw[line width=0.6mm,dash dot,-] (-3.6,4.0)--(-1,2.0);
\draw[line width=0.6mm,dashed,->] (-2.1666,2.87)--(-2.18,2.88);
\draw[line width=0.6mm,dash dot,-] (-3.6,4.0)--(-1,-1.0);
\draw[line width=0.6mm,dashed,->] (-1.999481,0.924) -- (-2,0.925);
\draw[line width=0.6mm,-] (-3.6,-1.0)--(-1,-1.0);
\draw[line width=0.6mm,->] (-1,-1.0)--(-2.3,-1.0);
\draw[line width=0.6mm,-] (-3.6,2.0)--(-1.0,2.0);
\draw[line width=0.6mm,->] (-1.0,2.0)--(-2.3,2.0);
\draw[line width=0.6mm,->] (-5.5,3.7)--(-5.5,3.0) node at (-5.15,3.0) {\Large $\renege$};
\draw[line width=0.6mm,->] (-5.5,1.7)--(-5.5,1.0) node at (-5.15,1.0) {\Large $\renege$};
\draw[line width=0.6mm,->] (-5.5,-1.3)--(-5.5,-2.0) node at (-5.15,-2.0) {\Large $\renege$};

\draw[fill=gray, gray, opacity=0.2] (-7.0, -1.5) rectangle (-4.0, -0.5);
\draw[fill=gray, gray, opacity=0.2] (-7.0, 1.5) rectangle (-4.0, 2.5);
\draw[fill=gray, gray, opacity=0.2] (-7.0, 3.5) rectangle (-4.0, 4.5);

\def\sf{-2}
\draw[line width=0.6mm] (-7,4.5+\sf)--(-4,4.5+\sf);
\draw[line width=0.6mm] (-7,3.5+\sf)--(-4,3.5+\sf);
\draw[line width=0.6mm] (-4,3.5+\sf)--(-4,4.5+\sf);

\draw[line width=0.6mm] (-7,0.5+\sf)--(-4,0.5+\sf);
\draw[line width=0.6mm] (-7,1.5+\sf)--(-4,1.5+\sf);
\draw[line width=0.6mm] (-4,0.5+\sf)--(-4,1.5+\sf);

\draw[line width=0.6mm] (-0,4) circle (5.5mm);
\draw[line width=0.6mm] (-0,2) circle (5.5mm);
\draw[line width=0.6mm] (-0,-1) circle (5.5mm);

\draw[fill=gray, gray, opacity=0.2] (-0,4) circle (5.5mm);
\draw[fill=gray, gray, opacity=0.2] (-0,2) circle (5.5mm);
\draw[fill=gray, gray, opacity=0.2] (-0,-1) circle (5.5mm);

\node at (4.5,1.5){\large \textbf{Jobs}};

\def\sf{2}
\draw[line width=0.6mm,->] (0+\sf,2.0)--(-1+\sf,2.0) node at (1.0+\sf,2) {\Large $\jobarr_1$};
\draw[line width=0.6mm,->] (0+\sf,4.0)--(-1+\sf,4.0) node at (1.0+\sf,4) {\Large $\jobarr_0$};
\draw[line width=0.6mm,->] (0+\sf,-1.0)--(-1+\sf,-1.0) node at (1.0+\sf,-1) {\Large $\jobarr_\numLoc$};

\node at (0,0.5){\Huge $\vdots$};
\node at (-5.5,0.5){\Huge $\vdots$};
\node at (-9.0,0.5){\Huge $\vdots$};

\draw node at(-10.25+1.5,4.4) {\large $\joinprob_{0,0}$};
\draw node at(-10.25+1.5,3.5) {\large $\joinprob_{0,1}$};
\draw node at(-10.25+0.25,3) {\large $\joinprob_{0,\numAgent}$};
\draw node at(-10.25+1.5,2.3) {\large $\joinprob_{1,1}$};
\draw node at(-10.25+1.5,-0.7) {\large $\joinprob_{\numAgent,\numAgent}$};

\draw[line width=0.6mm,opacity=1.0] (-10.0,4)--(-7.5,-1);
\draw[line width=0.6mm,->,opacity=1.0] (-10.0,4)--(-10+2.5/4.2, 4-5/4.2);
\draw[line width=0.6mm,opacity=1.0] (-10.0,4)--(-7.5,2);
\draw[line width=0.6mm,->,opacity=1.0] (-10.0,4)--(-10+2.5/2.5,4-2/2.5);

\draw[line width=0.6mm,opacity=1.0] (-10.0,4.0)--(-7.5,4.0);
\draw[line width=0.6mm,->,opacity=1.0] (-10.0,4.0)--(-8.7,4.0);

\draw[line width=0.6mm,opacity=1.0] (-10.0,2.0)--(-7.5,2);
\draw[line width=0.6mm,->,opacity=1.0] (-10.0,2.0)--(-8.7,2);
\draw node at(-10.25+1-1.5,2) {\Large $\agentarr_1$};
\draw[line width=0.6mm,opacity=1.0] (-10.0,-1)--(-7.5,-1) ;
\draw[line width=0.6mm,->,opacity=1.0] (-10.0,-1)--(-8.7,-1) ;

\draw node at(-10.25+1-1.5,-1) {\Large $\agentarr_\numAgent$};
\draw node at(-10.25+1-1.5,4) {\Large $\agentarr_0$};

\draw node at(-10.25+4.8,4) {\large queue $0$};
\draw node at(-10.25+4.8,-1.0) {\large queue $\numLoc$};
\draw node at(-10.25+4.8,2.0) {\large queue $1$};

\draw [decorate,decoration={brace,amplitude=10pt},xshift=-4pt,yshift=0pt]
(-11,-1) -- (-11,2) node [black,midway,xshift=-3.0cm] 
{\large \textbf{Specialized Agents}};
\end{tikzpicture}
    }
  \caption{The Flexibility Reservation (FR) Policy. %
  \normalfont{The solid arrows from jobs to queues represent the first priority, and the dash-dotted arrows represent the second priority. For each specialized agent $j \in \{1, \ldots, J\}$, it is clear that joining their corresponding queue ($\sigma_{j,j} = 1$) is a weakly dominant strategy. Therefore, we do not depict edges corresponding to other strategies in the figure.}}
  \label{fig:aggressive-capacity-reservation}

\end{figure}

\begin{proposition}[FR Can Be Worse Than RND] \label{thm:flip1-alt}
For any $\delta>0$, there exists an instance and an equilibrium strategy $\sigma^{\textrm{FR}}$ under the FR policy such that 
\[
\normalfont{\texttt{TP}}(\FLQ; \sigma^{\textrm{FR}}) < \left(\tfrac{1}{2}+\delta\right)\, \normalfont{\texttt{TP}}(\pi^{\textrm{RND}}).
\]
\end{proposition}

The FR policy can perform poorly when flexible agents find a particular specialized queue especially attractive, e.g., because those jobs arrive frequently or those specialized agents are scarce. Reserving flexible capacity helps only when the other specialized jobs and the flexible jobs also arrive at high rates. This creates a tension: misreporting hurts throughput the most when the rest of the system is busy, yet in that regime, the incentive to misreport can be weaker.

To prove the proposition, we exhibit an instance where this tension resolves against FR. The instance has only flexible agents, two job types, and two main characteristics: (i) many flexible jobs that arrive one step later, so agents must wait; and (ii) fewer specialized jobs that are available immediately upon agents’ arrival.
Because agents must wait for flexible jobs, both options are similarly attractive, and under FR, agents are willing to join the specialized queue.
The result is that FR matches all specialized jobs but sacrifices all flexible jobs. By contrast, the compatibility-agnostic random policy still matches all specialized jobs and, in addition, it captures an extra share of flexible jobs, matching them to agents who are still around when the next batch arrives. As we show below, in an instance with these characteristics, FR can deliver as little as about half the throughput of random.

\smallskip

\paragraph{Proof of \Cref{thm:flip1-alt}.}
Consider the compatibility graph in \Cref{fig:aggressive-capacity-reservation}, restricted to two job types $\loc=\{0,1\}$ with $\lambda_1=0$ and per-unit-time waiting cost $c=0$.  Suppose type~0 agents arrive in deterministic batches of size $D$ at epochs $0,T,2T,\dots$, with interarrival time $T>0$. Likewise, type~0 jobs arrive in identical batches of size $D$ at epochs $T^-,2T^-,\dots$, i.e., \emph{just before} the arrival of the corresponding batch of type~0 agents. In this setup, to get a type $0$ job, each type~0 agent must wait until the next batch of jobs arrives, yielding a constant waiting time of $T$ to get a type $0$ job. Hence, the probability that any type~0 agent remains in queue $0$ until matched with a type $0$ job is $e^{-\theta T}$.

Now, type~1 jobs arrive deterministically in batches of size $e^{-\theta T}D$ (assumed to be integral) at epochs $0^+,T^+,2T^+,\dots$, i.e., \emph{just after} the arrival of type~0 agents. Thus, if \emph{all} type~0 agents switch to queue~1, they again achieve matching probability $e^{-\theta T}$,\footnote{To see this formally, let $C$ be the steady-state number of carry-over agents, then $(C + D - e^{-\theta T}D)e^{-\theta T} = C$. Additionally, the probability $q$ of being matched in queue 1 under FR satisfies $q= \frac{e^{-\theta T}D}{C + D} + \left(1-\frac{e^{-\theta T}D}{C + D}\right)e^{-\theta T} q$.  } identical to staying in queue~0. Therefore, $\sigma^{\textrm{FR}}_{0,1}=1$ is an equilibrium strategy profile. In this case, the system throughput under  $\sigma^{\mathrm{FR}}$ equals
\[
\texttt{TP}(\FLQ;\sigma^{\mathrm{FR}}) \;=\; \frac{e^{-\theta T}D}{T}.
\]

By contrast, under the random (RND) policy, throughput has two contributions:  
(i) immediate matches between type~0 agents and type~1 jobs upon their arrival, yielding $e^{-\theta T}D/T$, and  
(ii) deferred matches of the surviving type $0$ agents from the current batch with the next batch of type~0 jobs, yielding $e^{-\theta T}(1-e^{-\theta T})D/T$.  
Thus,
\[
\texttt{TP}(\pi^{\mathrm{RND}}) \;=\; \frac{e^{-\theta T}D}{T} \;+\; \frac{e^{-\theta T}(1-e^{-\theta T})D}{T}.
\]

It follows that
\[
\frac{\texttt{TP}(\FLQ;\sigma^{\mathrm{FR}})}{\texttt{TP}(\pi^{\mathrm{RND}})}
=\frac{1}{2-e^{-\theta T}}.
\]
As $T\to\infty$ (and $D\to\infty$ to keep $e^{-\theta T}D\in\mathbb Z_{>0}$), $e^{-\theta T}\to 0$, and the ratio converges to $1/2$. %
$\square$

\section{Flexibility Reservation with Fallback (FRfb)} \label{sec:flq_with_recourse}

The possibility that the FR policy may yield arbitrarily lower throughput than a random policy is both intriguing and somewhat unsatisfactory. In the context of ridesharing, for example, this suggests that platforms might sometimes achieve better performance by disregarding drivers' stated destination preferences when making matching decisions---that is, by not offering the driver destination feature at all, as was the practice at Uber and Lyft prior to 2015.

In this section, we start in \Cref{sec:frfb_good} by introducing the \textit{Flexibility Reservation with Fallback} (FRfb) policy, which addresses the shortcomings of the flexibility reservation policy by potentially offering jobs to agents in seemingly incompatible queues. We then show that while this policy may incentivize more misreporting from flexible agents---a form of Braess paradox---it is never worse than the compatibility-agnostic random policy, and is strictly better in most problem instances. At the end of the section, %
we illustrate our results and insights with a numerical example. %

\subsection{FRfb: Policy Definition and Performance Guarantees}\label{sec:frfb_good}

\begin{definition}[Flexibility Reservation with Fallback]\label{dfn:FR_fb}
The \emph{Flexibility Reservation with Fallback} (FRfb) policy, denoted as $\policy^\RCR$, sets $\queueset^\RCR=\agentset$ queues, and it specifies priorities by  $\rho^\RCR_j = (\{j\}, \{0\}, \agentset \setminus \{j, 0\})$ for all $j \in \{1,2,\cdots,\numLoc\}$ and $\rho^\RCR_0=(\{0\}, \agentset \setminus \{0\})$ for type $0$ jobs.

\end{definition}

\smallskip

Similar to the FR policy, the FRfb policy also prioritizes reserving flexibility. It first attempts to match a job of type~$j$ with its corresponding specialized agents, and, if unavailable, then with flexible agents of type~$0$. A key distinction of the FRfb policy, however, is that if a match is not found after exhausting all agents in the compatible queues, it uses a fallback option: the job will be dispatched uniformly at random to agents in the \emph{incompatible} queues until it is either matched or lost. This approach leverages the possibility that some compatible agents may be ``hidden'' within the incompatible queues. This definition can also be extended to general compatibility graphs (see \Cref{dfn:FR_fb_general} in Appendix \ref{appx:frfb}). We now provide an example to illustrate the construction of the FRfb policy.

\smallskip

\begin{example}
Consider the case where $\agentset=\loc = \{0,1\}$, with type $0$ agents being flexible and type $1$ agents being specialized. The FRfb policy is specified as $\rho^\RCR_0 = (\{0\},\{1\})$ and $\rho^\RCR_1 = (\{1\},\{0\})$. Compared to the original FR policy, the key difference of the FRfb policy is the addition of \emph{seemingly incompatible arrows} from type $0$ jobs to queue $1$ (see the blue dash-dotted arrow in \Cref{fig:robust_capacity_reservation}) as a fallback option. If agents are not strategic and always join their designated queues, this mechanism does not provide any additional benefit. However, when flexible agents act strategically, some may choose to join queue $1$. As a result, type $0$ jobs that would otherwise have been lost can instead be completed by these flexible agents who have joined a specialized queue. 
The same principle applies in settings with additional specialized types. If many flexible agents join a particular specialized queue, under FR, this can cause other specialized jobs to go unserved. However, FRfb provides the option to match those jobs to the ``hidden" flexible agents.
\end{example}

\begin{figure}[htbp]
    \centering
    \scalebox{0.6}{\begin{tikzpicture}[baseline=0pt]

\node at (-10.5-3.3,4.0){\large \textbf{Flexible Agents}};
\node at (-10.5-3.7,2.0){\large \textbf{Specialized Agents}};
\draw[line width=0.6mm] (-7,4.5)--(-4,4.5);
\draw[line width=0.6mm] (-7,3.5)--(-4,3.5);
\draw[line width=0.6mm] (-4,3.5)--(-4,4.5);
\draw[line width=0.6mm,-] (-3.6,4.0)--(-1,4.0);
\draw[line width=0.6mm,<-] (-2.3,4.0)--(-1,4.0);
\draw[line width=0.6mm,dash dot,-] (-3.6,4.0)--(-1,2.0);
\draw[line width=0.6mm,dash dot,<-] (-2.95,3.5)--(-2.937,3.49);
\draw[line width=0.6mm,dash dot,-,blue] (-3.6,2.0)--(-1,4.0);
\draw[line width=0.6mm,dash dot,<-,blue] (-2.95,2.5)--(-2.937,2.51);
\draw[line width=0.6mm,-] (-3.6,2.0)--(-1.0,2.0);
\draw[line width=0.6mm,<-] (-2.3,2.0)--(-1.0,2.0);

\draw node at(-10.25+1.8,4.4) {\large $\joinprob_{0,0}$};
\draw node at(-10.25+1.8,3.4) {\large $\joinprob_{0,1}$};
\draw node at(-10.25+1.8,2.4) {\large $\joinprob_{1,1}$};
\draw node at(-10.25+1-1,2) {\Large $\agentarr_1$};
\draw node at(-10.25+1-1,4) {\Large $\agentarr_0$};
\draw[line width=0.6mm,opacity=1.0] (-9.5,4.0)--(-7.5,2.0); 
\draw[line width=0.6mm,->,opacity=1.0] (-9.5,4.0)--(-7.5-1.0,2.0+1.0); 
\draw[line width=0.6mm,opacity=1.0] (-9.5,4.0)--(-7.5,4.0); %
\draw[line width=0.6mm,->,opacity=1.0] (-9.5,4.0)--(-8.5,4.0); 
\draw[line width=0.6mm,opacity=1.0] (-9.5,2)--(-7.5,2); %
\draw[line width=0.6mm,->,opacity=1.0] (-9.5,2)--(-8.5,2);
\draw[line width=0.6mm,->] (-5.5,3.7)--(-5.5,3) node at (-5.15,3.0) {\Large $\renege$};
\draw[line width=0.6mm,->] (-5.5,1.7)--(-5.5,1) node at (-5.15,1.0) {\Large $\renege$};

\def\sf{-2}
\draw[line width=0.6mm] (-7,4.5+\sf)--(-4,4.5+\sf);
\draw[line width=0.6mm] (-7,3.5+\sf)--(-4,3.5+\sf);
\draw[line width=0.6mm] (-4,3.5+\sf)--(-4,4.5+\sf);

\draw[line width=0.6mm] (-0,4) circle (5.5mm);
\draw[line width=0.6mm] (-0,2) circle (5.5mm);

\node at (4.5,3){\large \textbf{Jobs}};

\def\sf{2}
\draw[line width=0.6mm,->] (0+\sf,2.0)--(-1+\sf,2.0) node at (1.0+\sf,2) {\Large $\jobarr_1$};
\draw[line width=0.6mm,->] (0+\sf,4.0)--(-1+\sf,4.0) node at (1.0+\sf,4) {\Large $\jobarr_0$};

\draw node at(-10.25+4.8,4) {\large queue $0$};
\draw node at(-10.25+4.8,2.0) {\large queue $1$};

\draw[fill=gray, gray, opacity=0.2] (-7.0, 1.5) rectangle (-4.0, 2.5);
\draw[fill=gray, gray, opacity=0.2] (-7.0, 3.5) rectangle (-4.0, 4.5);

\draw[fill=gray, gray, opacity=0.2] (-0,4) circle (5.5mm);
\draw[fill=gray, gray, opacity=0.2] (-0,2) circle (5.5mm);
\end{tikzpicture}}
    \caption{An example of FRfb policy with two types of jobs. 
    \textnormal{
    Black solid arrows from jobs to queues represent first-priority matches, while dash-dotted arrows indicate second-priority matches; the blue dash-dotted arrows highlight the added fallback option.}}
    \label{fig:robust_capacity_reservation}
\end{figure}

We next provide insights into the FRfb policy by first examining some of its equilibrium properties and then benchmarking its performance.
At first glance, the FRfb policy makes better use of flexible agents: jobs that cannot be fulfilled by agents in the compatible queues will be dispatched to the rest of the agents, and can potentially be fulfilled by the compatible agents therein. This benefit, however, does come at a cost since the resulting higher service rate for the incompatible queues may incentivize more flexible agents to pretend to be less flexible. The following result formalizes this observation with an example.

\begin{proposition} \label{prop:two_type_more_switch}
Consider the setting illustrated in \Cref{fig:robust_capacity_reservation} with two types of jobs $\loc=\{0,1\}$ and suppose that the utility $u_{i,q}^{\policy}(\joinprob)$ is continuous in agents' strategy profile $\joinprob$ for all $i\in\agentset$ and all $q\in\queueset$. For any arrival processes of agents and jobs and any model parameters, %
there exist equilibrium strategy profiles $\joinprobFLQR$ and $\joinprobFLQ$ such that the fraction of flexible type~$0$ agents joining queue~$1$ in equilibrium is weakly higher under the FR policy than under the FRfb policy, i.e., $\joinprobFLQR_{0,1} \ge \joinprobFLQ_{0,1}$.
\end{proposition}

\Cref{prop:two_type_more_switch} is related to Braess' paradox (see, e.g., \citealt{braess1968paradoxon}): adding a seemingly beneficial edge can, in equilibrium, result in an undesirable outcome. 
In our setting, this proposition shows that by adding an additional edge,  more flexible agents than under FR may choose to join queue $1$. As a consequence, the platform may possibly end up using flexible agents in an inefficient manner by matching more specialized type $1$ jobs to flexible agents.

Fortunately, despite this potential drawback of the FRfb policy, 
it delivers performance that sharply contrasts with that of the FR policy (cf. \Cref{thm:flip1-alt}).
While the FR policy can perform significantly worse in equilibrium compared to the RND policy, %
we show in the following theorem that the FRfb policy's performance is \emph{always} better than the randomized policy, in equilibrium or not.

\begin{theorem}[\normalfont{\textsc{Robust Performance of the FRfb Policy}}]
\label{thm:robust_improvement_FLQR} 
For any strategy profile %
$\joinprob\in\joinprobset(\queueset)$ such that 
$\joinprob_{i,i}=1,\:\forall i\in\{1,2,\cdots, \numLoc\}$, 
the throughput achieved by the FRfb policy is always higher than or equal to that under the RND policy:
\begin{align*}
    \normalfont{\texttt{TP}}(\FLQR;\joinprob) \ge \normalfont{\texttt{TP}}(\pi^{\textrm{RND}}).
\end{align*}
Moreover, under the FRfb policy, it is a weakly dominant strategy for specialized agents to stay in their corresponding queues, i.e., $\joinprob_{i,i}=1,\:\forall i\in\{1,2,\cdots, \numLoc\}$. 
\end{theorem}

We make two important observations. First,
\Cref{thm:robust_improvement_FLQR} shows that the FRfb policy outperforms the RND policy under \emph{any} strategy chosen by the flexible agent type. %
This guarantee is particularly strong, as it relies on \emph{no} equilibrium assumption beyond the very mild requirement that specialized agents play weakly dominant strategies. 
Second, note that under the compatibility graph in \Cref{fig:robust_capacity_reservation}, the FRfb policy coincides with the RND policy when all flexible agents switch to queue~$1$. However, as the switching fraction decreases, FRfb begins to outperform RND. When there are more than two job types, it is noteworthy that there is a \emph{strict} separation between FRfb and RND---FRfb is strictly better, since no matter how flexible agents pretend to be specialized, the two policies can never coincide.

In Appendix~\ref{appx:frfb}, we provide a generalized definition of the FRfb policy applicable to arbitrary compatibility graphs, where each agent type may serve an arbitrary subset of job types. Then, in \Cref{sec:sim_experiments}, we demonstrate that FRfb again achieves strong performance across a variety of compatibility graphs and problem instances, even when compared against an approximate second-best benchmark.

The proof of \Cref{thm:robust_improvement_FLQR} hinges on a key observation: for any type~$j \neq 0$ job, and for any fixed numbers of agents of each type on the platform, the probability of matching this job to a flexible type~$0$ agent under FRfb is always (weakly) lower than under the random policy. This holds \emph{regardless} of how many flexible agents pretend to be specialized by joining the specialized queues in FRfb. 
To see this, suppose that at time~$t$, the numbers of agents of each type on the platform are given by $\vecAgent \supt = (\agent\supt_0, \agent\supt_1, \dots, \agent\supt_\numLoc)$. Conditional on a type $j$ job being matched, the RND policy assigns the job to a flexible agent with probability $\agent_0\supt / (\agent_0\supt + \agent_j\supt)$. In contrast, the FRfb policy randomizes between the $\agent_j\supt$ type~$j$ agents and those flexible agents who hide in queue~$j \neq 0$, whose number is at most $\agent_0\supt$. This shows that FRfb better preserves flexible agents than the RND policy. The proof uses a sample path argument, taking the expectation over reneging and jobs' patience (cf. Remark 1), to formalize this intuition.

To conclude, in addition to the performance guarantee in \Cref{thm:robust_improvement_FLQR}, \Cref{thm:gap_first_best} below establishes a $1/2$ throughput guarantee for FRfb
relative to the full-information first-best throughput. This guarantee applies to \emph{any} compatibility graph beyond \Cref{fig:compatibility_graph_nested} (as defined in Appendix \ref{appx:fb}) and \emph{any} strategy profile, assuming jobs have infinite patience for agent rejections. %

\begin{proposition}[Half Approximation] \label{thm:gap_first_best}
When $p=1$, the FRfb policy achieves at least half of the full-information first-best throughput under any strategy profile $\joinprob \in \joinprobset(\queueset)$.
\end{proposition}
We note that the proof of the proposition implies that this $1/2$ guarantee holds for any non-idling matching policy under the given strategy profile, i.e., a job is always matched whenever a compatible agent is available. In turn, both FRfb and RND enjoy the guarantee. We view this result as completing the picture about
the FRfb: it is always better than RND and a 1/2 approximation of the first best. Figure 5 complements this picture, showing consistent high throughput for FRfb.

\smallskip

\subsection{Illustration of Results and Insights}\label{sec:ill-res-ins}
In this subsection, we use a numerical example to illustrate our theoretical results. We consider the compatibility graph shown in \Cref{fig:compatibility_graph_nested} with two types of agents and jobs $\agentset=\loc = \{0,1\}$ (type $0$ agent is flexible and type $1$ agent is specialized). %
Informally---details follow---type~0 jobs have a high arrival rate only during certain periods and a low rate most of the time, while Type~1 jobs arrive at a more consistent rate. 
A real-world analogy is Uber drivers choosing between food delivery and ridesharing: UberEats requests spike at specific times (e.g., dinner), whereas Uber rides requests are comparatively steady. 

Under the FR policy, flexible agents may choose to serve only the more consistent job type, capping the system’s throughput at that rate. In contrast, the RND policy assigns all jobs to the full pool of agents, allowing type~0 jobs to be matched during high-arrival periods and potentially achieving a substantial throughput advantage. FRfb can outperform both policies by reserving some flexible capacity and offering flexible jobs to the pool of seemingly incompatible agents.

\paragraph{Instance construction.} Formally, type~$0$ and type~$1$ agents arrive according to independent Poisson processes with rates $\agentarr_0$ and $\agentarr_1$, respectively; type~1 jobs arrive according to a Poisson process with rate $\jobarr_1$, while type~0 jobs follow a Markov-modulated process with states $\{L,H\}$---a high state~$H$ with rate $\mu_{1,H}$ and a low state~$L$ with rate $\mu_{1,L}$. Let $\kappa_{L\rightarrow H}$ and $\kappa_{H\rightarrow L}$ denote the transition rates between the two states. We choose parameters so that type~0 job arrivals become increasingly spiky, arriving in bursts---see the caption of \Cref{fig:robust-per} for the specific parameters. %
We consider two scenarios, one in which jobs have infinite patience, $p=1.0$, %
and another in which jobs will be lost  if declined by on average five agents, $p=0.8$. 
We note that ridesharing drivers typically have at most $15$ seconds to accept or decline a job, so this gives a rider patience for about one minute, which is quite conservative. %

\paragraph{Policies.} We compare three policies: (1) the FR policy whose performance is independent of the value of $p$ as it never sends jobs to incompatible agents; (2) the RND policy that sends jobs to agents uniformly at random whose performance depends on $p$; (3) our proposed FRfb policy whose performance also depends on $p$ as type $0$ jobs will be sent to possibly type $1$ agents if there is no agent in queue $0$.

\paragraph{Discussion of results.} We report the policies' performance in \Cref{fig:robust-per} as the arrivals of type~0 jobs become increasingly spiky ($\varepsilon$ decreases).  \Cref{fig:7a} presents throughput as a fraction of the first-best---the maximum throughput under full information---while \Cref{fig:7b} shows the equilibrium fraction of flexible agents who pretend to be specialized (i.e., \(\sigma_{0,1}\) in equilibrium). 
\Cref{fig:7a} shows that the performance of the FR policy (red curve) continues to deteriorate as $\varepsilon$ decreases: queue 1 becomes more appealing so that more and more flexible agents join it (see \Cref{fig:7b}). As this happens, we increasingly lose the opportunity to match type 0 jobs, deteriorating the performance of FR, and making it even worse than random (cf. \Cref{thm:flip1-alt}). 
In contrast, the proposed FRfb policy consistently delivers near-optimal throughput. %
FRfb is slightly outperformed by FR under $p=0.8$ when $\varepsilon$ is relatively large, reflecting the Braess' paradox result (the comparison of the equilibrium fraction in \Cref{fig:7b} also corroborates \Cref{prop:two_type_more_switch}). %
It is interesting to observe that the throughput of the FRfb policy does \emph{not} necessarily decrease as the survival probability $p$ decreases. On the one hand, a low value of $p$ makes type $0$ jobs harder to match successfully; on the other hand, it also makes queue $1$ less attractive as it effectively increases the waiting time of flexible agents in it, resulting in fewer flexible agents switching to queue 1 (see \Cref{fig:7b}). In fact, in the extreme case of $p=0$---a job is immediately lost after one agent rejection---FR and FRfb become the same in equilibrium. Finally, note that the throughput comparison between the RND and FRfb policies in \Cref{fig:7a} under the same values of $p$ confirms \Cref{thm:robust_improvement_FLQR}. %

\begin{figure}[htbp]  
\captionsetup[subfigure]{justification=centering}
\centering 

 \begin{subfigure}[b]{0.48\textwidth}
 \centering
    \begin{tikzpicture}
    
    \begin{axis}[
    width=3in,
    legend style = { at = {(0.6,0.75)}},
    no marks,
    xlabel = $\log_{10}(1/\varepsilon)$,
    xmax   = 4,
    xmin   = 0,
    xtick={0,0.5,1,1.5,2,2.5,3,3.5,4},
    ymax   = 1.01,
    ymin   = 0.65,
    label style={font=\small},
    tick label style={font=\small},
    legend style={
        at={(0.45,0.5)}, %
        anchor=north,     %
        font=\scriptsize
    }
    ]
    \addplot[smooth, color = red,
    line width=1.2pt,
    solid] table[x=eps_exponent, y = FR]{figures/data/new_utility.txt};
    \addlegendentry{FR};
    \addplot[smooth,color = green,
    line width=1.2pt, dashed] table[x=eps_exponent, y = RND_08]{figures/data/new_utility.txt};
    \addlegendentry{RND (p=0.8)} ;
    \addplot[smooth,color = green,
    line width=1.2pt] table[x=eps_exponent, y = RND_10]{figures/data/new_utility.txt};
    \addlegendentry{RND (p=1.0)} ;
    \addplot[smooth, color = blue,
    line width=1.2pt, dashed] table[x=eps_exponent, y = FRfb_08]{figures/data/new_utility.txt};
    \addlegendentry{FRfb (p=0.8)};
    \addplot[smooth, color = blue,
    line width=1.2pt] table[x=eps_exponent, y = FRfb_10]{figures/data/new_utility.txt};
    \addlegendentry{FRfb (p=1.0)};
    \end{axis}
  \end{tikzpicture}%
    \caption{Throughput (as a fraction of FB)}
  \label{fig:7a}
\end{subfigure}
~
\begin{subfigure}[b]{0.48\textwidth}
\centering
    \begin{tikzpicture}
    \begin{axis}[
    width=3in,
    legend style = { at = {(0.6,0.75)}},
    no marks,
    xlabel = $\log_{10}(1/\varepsilon)$,
    xmax   = 4,
    xmin   = 0,
    xtick={0,0.5,1,1.5,2,2.5,3,3.5,4},
    ymax   = 1.05,
    ymin   = -0.05,
    label style={font=\small},
    tick label style={font=\small},
    legend style={
        at={(0.73,0.92)}, %
        anchor=north,     %
        font=\scriptsize
    }
    ]
    \addplot[color = red,
    line width=1.2pt] table[x=eps_exponent, y = eq_FR]{figures/data/new_utility.txt};
    \addlegendentry{FR} ;

    \addplot[color = blue,
    line width=1.2pt, dashed] table[x=eps_exponent, y = eq_FRfb_08]{figures/data/new_utility.txt};
    \addlegendentry{FRfb (p=0.8)};
    \addplot[color = blue,
    line width=1.2pt] table[x=eps_exponent, y = eq_FRfb_10]{figures/data/new_utility.txt};
    \addlegendentry{FRfb (p=1.0)};
    
    \end{axis}
  \end{tikzpicture}
  \caption{Equilibrium prob. of joining queue $1$}
  \label{fig:7b}
\end{subfigure}
\caption{Performance of various policies. {\normalfont \Cref{fig:7a} depicts throughput as a fraction of the first-best throughput for policies FR, RND, and FRfb. \Cref{fig:7b} shows the fraction of flexible agents that join queue $1$ in equilibrium for policies FR and FRfb. The model primitives are $\lambda_1=30,\:\mu_1 = \log^2_{10}(1/\varepsilon) ,\:\lambda_0 = \log^2_{10}(1/\varepsilon)/\varepsilon,\:\mu_{0,L}=2, \mu_{0,H} = \log_{10}(1/\varepsilon)/\varepsilon^2, \kappa_{L\to H}=\varepsilon, \kappa_{H\to L}=1$ and $\theta = \log_{10}(1/\varepsilon)/\varepsilon$.
We consider two values of job survival probabilities $p\in\{0.8, 1.0\}$ after each agent rejection.
}}
\label{fig:robust-per}
\end{figure}

\section{Simulation Experiments} \label{sec:sim_experiments}

Complementing our previous theoretical results, in this section, we use simulation to compare four matching policies %
across three increasingly general families of compatibility graphs. For each policy–graph pair, we compute the induced equilibrium joining behavior via a replicator-dynamics procedure and then simulate the resulting system. We report throughput, jobs lost due to excessive rejections, the share of agents who misreport, and average match probabilities.

\subsection{Simulation Environments and Parameters}\label{sec:env-params}

\paragraph{Graph families.} For each agent type $i\in\agentset$, let $\loc(i)\subset\loc$ be the subset of job types this agent type can serve. %
We consider three types of compatibility graphs: (1) $G_1$ is the compatibility graph depicted in \Cref{fig:compatibility_graph_nested} with $\agentset=\loc=\{0,1,\cdots,\numLoc\}$ where the fully flexible type $0$ agent can serve all types of jobs ($\loc(0)=\loc$) and type $i\neq 0$ agents are specialized to serve only type $i$ jobs ($\loc(i)=\{i\},\:\forall i\neq 0$); (2) $G_2$ is another compatibility graph with $\agentset=\loc=\{0,1,\cdots,\numLoc\}$  and $\loc(i)=\{i, i+1,\cdots, \numLoc\}$, i.e., all agent types are completely nested with $\loc(0)\supset \loc(1)\supset\cdots\supset\loc(\numLoc)$; and finally (3) $G_3$ is a complete graph with $|\agentset| = |2^\loc|-1$; that is, for each nonempty subset of job types in $\loc$, there exists a compatible agent type.

The graphs $G_2$ and $G_3$ generalize our baseline compatibility graph $G_1$. The monotone structure of $G_2$ captures hierarchical compatibility; for example, in ridesharing, higher-capacity vehicles (e.g., minivans) can serve everything medium- and lower-capacity vehicles (e.g., SUVs and hatchbacks) can. By contrast, $G_3$ imposes no structure and allows arbitrary compatibility patterns.
The generality of $G_3$ allows us to effectively simulate a wide range of different compatibility configurations by changing the arrival rates of agents and jobs.  

\paragraph{Model primitives.}
We set the number of job types $\numLoc=4$. %
We assume that agents and jobs arrive to the platform according to Poisson processes and the total arrival rates of jobs and agents are $\sum_{j\in\loc}\jobarr_j = 4\numLoc$ and $\sum_{i\in\agentset}\agentarr_i = 5\numLoc$, respectively. 
The total arrival rate of agents is made slightly higher than the rate of jobs to balance overall demand and supply as agents renege in the system. The reneging rate of all agents is assumed to be $\theta = 1$. Given the total arrival rates of agents, the proportion of each type is sampled from a %
Dirichlet distribution with parameter $\alpha = (\alpha_i)_{i\in\agentset}$. We consider two types of distributions. One is a symmetric Dirichlet distribution with $\alpha_i=1,\:\forall i\in\agentset$, i.e., on average each type of agent has the same proportion; the other one has $\alpha_i = |\loc(i)|,\:\forall i\in\agentset$, i.e., on average type $i$ agent has a proportion of $|\loc(i)|/(\sum_{i\in\agentset}|\loc(i)|)$---more flexible agents on average have a larger presence. On the other hand, the proportion of each job type is always sampled from a symmetric Dirichlet distribution with all parameters equal to one. Finally, similarly to the example in \Cref{fig:robust-per}, we assume that each rejection by an agent results in an independent event in which the job survives with probability $p\in\{0.8, 1.0\}$. 

For each distribution choice, we average the performance over $100$ random draws of $(\agentarr_i)_{i\in\agentset}$ and $(\mu_j)_{j\in\loc}$.  The length of each simulation run is $T=1,000$. We introduce the policies tested and explain how the agents' equilibrium strategy profile is computed within our simulation in the next subsection. 

\subsection{Policies Evaluated and Equilibrium Computation}\label{sec:pol-eval}
In what follows, we introduce the four different policies we simulate---FR, RND, FRfb, and an iterative approximation to the second-best policy---and an upper bound on the first-best  throughput with full information. %
We also detail an iterative algorithm that can be used to compute agents' equilibrium strategy profile under any policy and graph structure.

\begin{enumerate}
\item \textbf{A fluid upper bound on the first-best policy with full information (UB).} Computing the exact first-best matching policy for a general compatibility graph (e.g., $G_3$) is intractable due to the curse of dimensionality, since agent flexibility levels may not admit a strict ordering for a given job type (e.g., one agent type may serve job types~1 and~2, and another may serve job types~1 and~3, making it unclear which type is more flexible when assigning job type~1). In Appendix \ref{appx:fb}, we present a tractable upper bound on the optimal throughput with full information by solving a fluid linear programming relaxation of the dynamic stochastic matching problem. %

\item \textbf{The flexibility reservation policy (FR).} 
The FR policies for compatibility graphs \(G_1\) and \(G_2\) admit explicit characterizations, with the policy for \(G_2\) extending the FR policy in \Cref{prop:fb} (see Appendix \ref{appx:fb}). These policies are also first-best for \(G_1\) and \(G_2\) (see \Cref{prop:fb_general} in Appendix \ref{appx:fb}). For the complete compatibility graph \(G_3\), computing the exact optimal policy is intractable; instead, we construct a heuristic policy derived from the optimal dual values for each agent queue in the same fluid linear program in Appendix \ref{appx:fb} used for the upper bound. These dual values represent the shadow prices of an agent in each queue and induce, for each job type \(j \in \loc\), a strict priority ordering of compatible agent queues. This ordering is determined by the magnitude of the dual values, replacing the flexibility-level ranking in \Cref{prop:fb}, which is not well-defined in \(G_3\). See Appendix \ref{appx:fb} for details.

\smallskip

\item \textbf{Random policy (RND).} Jobs are sent to all agents uniformly at random without replacement. 

\smallskip

\item \textbf{Flexibility reservation with fallback policy (FRfb).} For \(G_1\), this corresponds to the FRfb policy as given by \Cref{dfn:FR_fb}. 
For \(G_2\) and \(G_3\), we extend the FRfb policy as given by \Cref{dfn:FR_fb_general} of Appendix~\ref{appx:frfb}, 
to accommodate general compatibility graphs.

\smallskip

\item \textbf{An iterative procedure to approximate a Stackelberg equilibrium/second-best policy (SB).} %
We adopt a popular heuristic (see, e.g., \citealt{marcotte1992efficient}) to approximate the solution of a Stackelberg game---in our setting, the leader is the platform, which sets a matching policy, and the followers are agents reporting their types. This yields the so-called \emph{second-best} policy, i.e., the throughput-maximizing matching policy when agents behave strategically. The platform maintains a queue for each agent type. The iterative procedure starts by assuming that all agents truthfully report their types by joining their designated queues, after which the matching policy is optimized based on the reported type rates. Agents then update their joining strategies to form a new equilibrium, and the platform re-optimizes the policy using the new reported rates. This process repeats until neither the platform nor the agents have an incentive to deviate. The converged policy is generally not a Stackelberg (subgame-perfect) equilibrium but rather a Nash equilibrium in which neither party can unilaterally improve. We report the performance of the policy that achieves the \emph{highest} equilibrium throughput during this process. For the complete compatibility graph \(G_3\), policy updates are obtained by \emph{resolving} the fluid linear program in Appendix~\ref{appx:fb} with the current reported type rates to compute updated dual values, which are then used to construct the policy for \(G_3\).
For $G_1$ and $G_2$, the optimal policy \emph{does not} depend on the reported type rates (see \Cref{prop:fb_general} in Appendix \ref{appx:fb}), so the procedure converges in a single iteration, yielding the same performance as the FR policy.

\end{enumerate} 

\smallskip

\paragraph{Computing equilibrium joining strategy.} Given a policy, computing the joining equilibrium exactly with a large number of agent types can be a daunting and computationally intensive task. Instead, we use an evolutionary dynamics technique called replicator dynamics to 
approximate the agents' equilibrium strategies. These dynamics are initialized with a strategy profile $\sigma^0 = (\sigma_i^0)_{i\in\agentset}$, e.g., a random joining strategy profile. 
Without loss of generality (cf. \Cref{subsec:agent}), we set 
$u^\policy_{i,q}(\sigma) = \PP[W_{i,q}^{\policy}(\joinprob)\le R_{\theta}]$---the payoff of an agent of type $i$ joining queue $q$ under policy $\policy$ if all other agents adopt the strategy profile $\joinprob$.
Let $\bar{u}_i^\policy(\joinprob) = \sum_{q\in\mathcal{Q}} \joinprob_{i,q} u^\policy_{i,q}(\joinprob)$ be the average payoff of a type $i$ agent. We use the following updating rule:
\be
\label{eq:replicator_dynamics}
\joinprob^{t+1}_{i,q} - \joinprob^{t}_{i,q} = \joinprob^{t}_{i,q}\cdot\frac{ u^\policy_{i,q}(\joinprob^t) - \bar{u}^\policy_{i}(\joinprob^t) }{\bar{u}^\policy_{i}(\joinprob^t)},\quad \forall i\in\agentset,\: q\in\mathcal{Q}.
\ee
It is not hard to see that as long as $\sum_{q\in\mathcal{Q}}\joinprob_{i,q}^t = 1$, we have $\sum_{q\in\mathcal{Q}}\joinprob_{i,q}^{t+1} = 1$, i.e., $\joinprob^{t+1}$ remains a valid strategy profile. Moreover, if the replicator dynamics \eqref{eq:replicator_dynamics} has a stationary point $\joinprob^\ast$, i.e., $\joinprob^{t+1} = \joinprob^{t} = \joinprob^\ast$ and $\joinprob^\ast$ is asymptotically stable in the sense that there exists a neighborhood of $\joinprob^\ast$ such
that starting from any $\joinprob^0$ in this neighborhood, the replicator dynamics \eqref{eq:replicator_dynamics} approaches $\joinprob^\ast$,  %
then $\joinprob^\ast$ is a Nash equilibrium strategy \citep{taylor1978evolutionary, cressman2003evolutionary}. For each iteration $t$, we estimate $u^\policy_{i,q}(\joinprob^t)$ using Monte-Carlo simulation (empirical counts of matching frequency) and terminate the dynamics as long as the weighted variance of agent payoff $\sum_{q\in\mathcal{Q}}\joinprob^t_{i,q}(u^\policy_{i,q}(\joinprob^t) - \bar{u}^\policy_{i}(\joinprob^t))^2$ is small enough for each agent type $i\in\agentset$.

For each matching policy, we compute the equilibrium joining strategy using the aforementioned replicator dynamics with Monte-Carlo simulation. After each run, the payoff (empirical counts of matching frequency) is calculated and the strategy profile is updated according to \cref{eq:replicator_dynamics}. This procedure continues until convergence.

\smallskip

\subsection{Results}\label{sec:sim_results}

\Cref{tb:sim_results} below provides the computational results. The first four rows report the throughput of different policies as a fraction of the throughput upper bound obtained by solving the fluid linear program. 
It can be observed that the FRfb policy consistently delivers the highest throughput. %
It is also interesting to see that the FRfb policy under rejection penalty ($p=0.8$) outperforms the RND policy without any rejection penalty ($p=1.0$), strengthening the message of \Cref{thm:robust_improvement_FLQR}. The performance of the FR policy degrades especially when there is a larger proportion of flexible agents in the system (the case of $\alpha_i=|\loc(i)|,\:\forall i\in\agentset$) as the incentive of under-reporting their types increases. In these cases, its performance is worse than that of the RND policy when rejection has no penalty, corroborating \Cref{thm:flip1-alt}. Under the complete compatibility graph $G_3$, the SB policy improves upon the FR policy, though not by much. 

The four rows in the middle of \Cref{tb:sim_results} show the fraction of jobs lost due to excessive agent rejection out of the total number of jobs lost. The FRfb policy has a range of $3\%$--$15\%$ when $p=0.8$, which is significantly lower than that of the RND policy. This is reassuring; the FRfb policy only sends the jobs to potentially incompatible agents as a last resort, leading to relatively low rejection probabilities. 

The last four rows report the fraction of agents who misreport their types in equilibrium. In line with \Cref{prop:two_type_more_switch}, more agents misreport their types under FRfb compared to FR. On the other hand, SB seems to slightly better incentivize agents to report their true types under the compatibility graph $G_3$, which might explain the throughput improvement over FR.

\begin{table}[htbp]
\footnotesize
\centering
\begin{tabular}{clcccccccccccc}
\toprule
\multicolumn{2}{c}{\textbf{Agent Distribution}}            & \multicolumn{6}{c}{$\alpha_i = 1,\:\forall i\in\agentset$}                                          & \multicolumn{6}{c}{$\alpha_i=|\loc(i)|,\:\forall i\in\agentset$}                                            \\ \cmidrule(r){3-8} \cmidrule(r){9-14}  
\multicolumn{2}{c}{\textbf{Compatibility Graphs}}          & \multicolumn{2}{c}{$G_1$} & \multicolumn{2}{c}{$G_2$} & \multicolumn{2}{c}{$G_3$} & \multicolumn{2}{c}{$G_1$} & \multicolumn{2}{c}{$G_2$} & \multicolumn{2}{c}{$G_3$} \\ \cmidrule(r){3-4} \cmidrule(r){5-6}  \cmidrule(r){7-8} \cmidrule(r){9-10} \cmidrule(r){11-12} \cmidrule(r){13-14}  
\multicolumn{2}{c}{\textbf{Survival Probability ($p$)}}    &   0.8    &    1.0   &   0.8     &   1.0    &    0.8    &   1.0    &    0.8    &   1.0    &    0.8    &   1.0    &    0.8    &    1.0   \\ \cmidrule(r){1-14} 
 \multicolumn{1}{c}{\multirow{4}[0]{*}{\makecell{\textbf{Throughput} \\[0.5mm] (fraction of fluid UB)}}}             & FR   & 0.896 & 0.896 & 0.872 &     0.872        &    0.819     &    0.819    & 0.863 & 0.863 & 0.860 &    0.860  & 0.816          &   0.816        \\
                                           & RND  & 0.760 & 0.885  & 0.809  & 0.880           & 0.750       & 0.851      & 0.795 & 0.869  &  0.830 & 0.878           & 0.789           &  0.865          \\
                                           & FRfb & 0.900 &  0.903 & 0.887  & 0.893         &  0.855          & 0.866        & 0.882 & 0.886  & 0.884 & 0.890        & 0.867           & 0.878          \\
                                           & SB   & 0.896 & 0.896 & 0.872 & 0.872            & 0.825          & 0.825       & 0.863 & 0.863 &  0.860  & 0.860       & 0.822           &  0.822         \\ \cmidrule(r){1-14} 
\multicolumn{1}{c}{\multirow{4}[0]{*}{\makecell{\textbf{Jobs Lost} \\ \textbf{due to Rejection} \\[0.5mm] (fraction of \\ total lost jobs)}}} & FR   &    0.000    &  0.000     &   0.000    &  0.000     &   0.000   &  0.000     &  0.000    &   0.000    &  0.000   &  0.000   &  0.000   & 0.000    \\
                                           & RND  & 0.501 & 0.000 & 0.438 & 0.000 & 0.637          & 0.000      & 0.585 & 0.000 & 0.461 & 0.000            & 0.608            & 0.000          \\
                                           & FRfb & 0.036 &  0.000 & 0.065 & 0.000         & 0.123           & 0.000       & 0.081 & 0.000  & 0.095 & 0.000   & 0.143        & 0.000        \\
                                           & SB   &  0.000  &  0.000         & 0.000      & 0.000        & 0.000        & 0.000           & 0.000        & 0.000         & 0.000        & 0.000       & 0.000         & 0.000          \\ \cmidrule(r){1-14}
\multicolumn{1}{c}{\multirow{4}[0]{*}{\makecell{\textbf{Agents Deviated} \\[0.5mm] (fraction of \\ total agents)}}}          & FR   & 0.079 & 0.079 & 0.232 &  0.232   &   0.677      & 0.677       & 0.271 & 0.271 & 0.371 & 0.371 & 0.803  & 0.803 \\
                                           & RND  &  0.000  & 0.000            & 0.000       &   0.000          &  0.000       &  0.000           & 0.000        & 0.000           & 0.000        & 0.000            & 0.000          & 0.000            \\
                                           & FRfb & 0.111 & 0.143 & 0.267  & 0.294        &  0.723       & 0.729       & 0.332 & 0.353 & 0.418 & 0.443          & 0.858          & 0.868          \\
                                           & SB    & 0.079 & 0.079 & 0.232  & 0.232     & 0.670          & 0.670       & 0.271 & 0.271 & 0.371 & 0.371     & 0.797           & 0.797           \\ \bottomrule
\end{tabular}
\caption{Simulation results.}
\label{tb:sim_results}
\end{table}

 \Cref{fig:virtual} reports the average matching probabilities (agent utilities) aggregated by priority group, where the priority group of a type $i$ agent is defined as $|\loc(i)|$---the number of job types it can serve. Agents in smaller priority groups are prioritized for matching under FR and FRfb. We consider a rejection penalty of $p=0.8$ and compare the FR, FRfb, and RND policies. The matching probability distributions for FR and RND illustrate two extremes: under RND, more flexible agents enjoy higher matching probabilities since they serve more job types, whereas under FR, less flexible agents benefit from prioritization. Interestingly, FRfb strikes a balance between these extremes and improves the matching probabilities for nearly all priority groups relative to both FR and RND: it outperforms RND by allowing agents to %
 be prioritized by the level of flexibility, and it outperforms FR through its fallback option.

\begin{figure}[htbp]
\centering
\begin{subfigure}[b]{0.48\textwidth}
  \centering
\begin{tikzpicture}
  \centering
  \begin{axis}[
        ybar, axis on top,
        height=6.0cm, width=6.0cm,
        bar width=0.12cm,
        tick align=inside,
        enlarge y limits={value=.1,upper},
        ymin=0, ymax=0.8,
        axis x line*=bottom,
        axis y line*=left,
        y axis line style={opacity=0},
        tickwidth=0pt,
        enlarge x limits=true,
        legend image code/.code={
        \draw [#1] (0cm,-0.1cm) rectangle (0.12cm,0.25cm); },
        legend style={
            anchor=north east,
            font=\footnotesize
        },
        xlabel={Priority Group},
        ylabel={Matching Probability},
        label style={font=\footnotesize},
       xtick={1,2,3,4},
    ]
        \addplot table [
    x index=0, 
    y index=1,
    col sep=space
    ] {wait_G3_newutil.txt};
    \addplot table [
    x index=0, 
    y index=2,
    col sep=space
    ] {wait_G3_newutil.txt};
    \addplot table [
    x index=0, 
    y index=3,
    col sep=space
    ] {wait_G3_newutil.txt};
    \legend{FR, FRfb ($p=0.8$), RND ($p=0.8$)}
  \end{axis}
  \end{tikzpicture}
\caption{Agent distribution with $\alpha_i=1,\:\forall i\in\agentset$}
  \label{fig:virtual_1}
\end{subfigure}
~
\begin{subfigure}[b]{0.48\textwidth}
  \centering
\begin{tikzpicture}
  \centering
  \begin{axis}[
        ybar, axis on top,
        height=6.0cm, width=6.0cm,
        bar width=0.12cm,
        tick align=inside,
        enlarge y limits={value=.1,upper},
        ymin=0, ymax=0.8,
        axis x line*=bottom,
        axis y line*=left,
        y axis line style={opacity=0},
        tickwidth=0pt,
        enlarge x limits=true,
        legend image code/.code={
        \draw [#1] (0cm,-0.1cm) rectangle (0.12cm,0.25cm); },
        legend style={
            anchor=north east,
            font=\footnotesize
        },
        xlabel={Priority Group},
        label style={font=\footnotesize},
       xtick=data,
    ]
          \addplot table [
    x index=0, 
    y index=1,
    col sep=space
    ] {wait_G3_flex_newutil.txt};
    \addplot table [
    x index=0, 
    y index=2,
    col sep=space
    ] {wait_G3_flex_newutil.txt};
    \addplot table [
    x index=0, 
    y index=3,
    col sep=space
    ] {wait_G3_flex_newutil.txt};
    \legend{FR, FRfb ($p=0.8$), RND ($p=0.8$)}
  \end{axis}
  \end{tikzpicture}
\caption{Agent distribution with $\alpha_i=|\mathcal{L}(i)|,\:\forall i\in\agentset$}
  \label{fig:virtual_2}
\end{subfigure}
\caption{Matching probabilities (agent utilities) under the complete compatibility graph $G_3$ across different policies.}
\label{fig:virtual}
\end{figure}

\section{Concluding Remarks}
\label{sec:conclusion}

Returning to the affordable-housing allocation example discussed in Section~\ref{sec:introduction}, the system can be modeled with two job types (resale units and new units) and three agent types (flexible, resale-only, and new-only applicants). Under the current mechanism, new units are allocated by lottery among flexible and new-only applicants, while resale units are allocated by lottery among flexible and resale-only applicants, which effectively mimics a compatibility-agnostic random policy (RND)---because applicants can decline offers, it is in equilibrium a weakly dominant strategy to report as flexible, and NYC Housing Connect even encourages applicants to ``check the resale box.''\footnote{\url{https://housingconnect.nyc.gov/PublicWeb/about-us}, accessed 08/17/2025.} The drawback is that applicants who truly prefer only new units can receive multiple resale offers, leading to inefficiencies. The FRfb policy avoids this: it offers resale units first to resale-only applicants, then to flexible applicants, and only as a last resort to self-reported new-only applicants who may be compatible.

On a separate note, one may wonder whether there exists a matching policy that induces completely truthful reporting, as the FRfb policy may still incentivize some agents to under-report their types. Using the example in \Cref{fig:robust_capacity_reservation} with two job types, one might consider a policy that \emph{partially} prioritizes flexible agents—for instance, sending a type~1 job to queue~1 first with a certain probability, fine-tuned so that no type~0 flexible agent has an incentive to misreport. However, determining this probability would require precise knowledge of arrival rates and specific assumptions about agents' strategic behavior, becoming computationally intensive with multiple agent types. In contrast, the FRfb policy is entirely parameter-free and does \emph{not} depend on any equilibrium concept or assumptions about agents' strategic behavior, including their utility functions.

From a practical standpoint, the simple, parameter-free design and robust performance guarantees of the FRfb policy make it a strong candidate for implementation. Consider our ridesharing example: \Cref{fig:fomo} illustrates how FRfb is applied in Uber’s driver destination feature. In \Cref{fig:fomo_1}, a driver in destination mode (reporting as specialized) receives a trip toward her chosen destination, corresponding to dispatching a job of type \(j \neq 0\) to queue \(j\) in the model (see \Cref{fig:aggressive-capacity-reservation}). In \Cref{fig:fomo_2}, the same driver is offered a trip away from her destination, representing a job of type~0 dispatched to a nonzero queue. In both cases, the driver can accept (``tap to accept'') or reject (``no thanks'') based on her true preferences. For the scenario in \Cref{fig:fomo_2}, a genuinely specialized driver will reject the trip, whereas a flexible driver pretending to be specialized will accept it. Under FRfb, such away-from-destination dispatches occur only when no drivers who reported being flexible are available.

\begin{figure}[htbp]
\centering
\begin{subfigure}[b]{0.4\textwidth}
    \centering
    \includegraphics[width=0.6\linewidth]{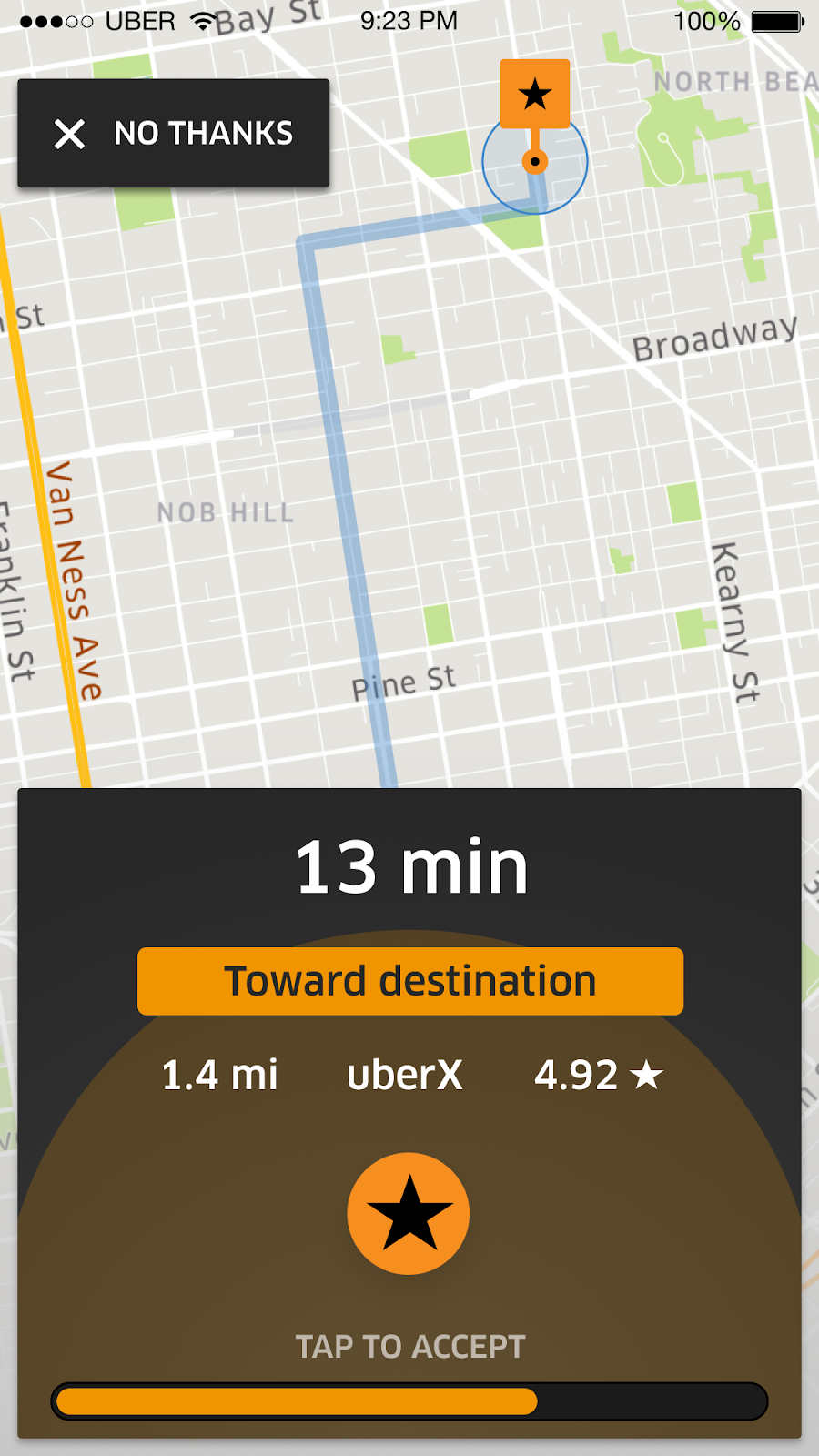}
    \caption{A driver with destination mode receives a ride \emph{toward} her destination}
    \label{fig:fomo_1}
\end{subfigure}
~~~~
\begin{subfigure}[b]{0.4\textwidth}
  \centering
  \includegraphics[width=0.6\linewidth]{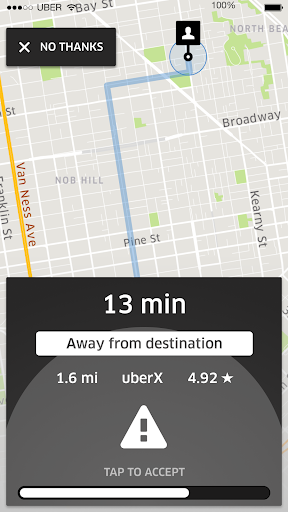}
  \caption{A driver with destination mode receives a ride \emph{away} from her destination}
  \label{fig:fomo_2}
\end{subfigure}
\caption{An implementation of the flexibility reservation with fallback (FRfb) policy in the driver destination product on Uber's driver app.}
\label{fig:fomo}
\end{figure}

\bibliographystyle{informs2014} %
\bibliography{refs}

\medskip

\begin{APPENDICES}

\section{Results for General Compatibility Graph and Auxiliary Formal Results} \label{sec:general_compatibility_graph}

In this appendix section, we present generalized results for compatibility graphs where each agent type can perform any subset of job types, along with several auxiliary formal results.
Let $\loc = \{0, \cdots, \numLoc\}$ be the set of job types and $\agentset = \{0, \cdots, \numAgent\}$ be the set of agent types.
For each job $j\in\loc$, $\agentset(j)$ is the set of agent types that are compatible with job type $j$ and similarly for each agent type $i\in\agentset$, $\loc(i)$ is the set of job types agent type $i$ can serve. For two types of agents $i\neq i^\prime\in\agentset$, we say that agent type $i$ is more flexible than agent type $i^\prime$ (or agent type $i^\prime$ is more specialized than agent type $i$) if $\loc(i)\supset\loc(i^\prime)$.

\subsection{Matching Policies under Full Information} 
\label{appx:fb}
We now formally define the set of matching policies we consider under the full information setting. The policy acts each time a job arrives. Let $j\in\loc$ be an arbitrary type of arriving job. Let $\vecAgent$ indicate the state at the time when this job arrives where %
$\vecAgent \in\mathbb{Z}_{\ge0}^{\numAgent+1}$
contains the number of waiting agents of each type. A matching policy is a function $\typepolicy: \mathbb{Z}_{\ge0}^{\numAgent+1}\times\loc \rightarrow [0,1]^{\numAgent + 1}$, where $\typepolicy_i(\vecQ, j)$ denotes the probability of assigning a type $j$ job to a type $i$ agent, when the number of agents of different types is $\vecQ$.  The set of admissible policy $\typepolicyset$ is defined as: 
\begin{align}
    \label{eq:admissible_policy}
    \typepolicyset \triangleq \left\{\typepolicy : ~~\begin{array}{ll}
    \sum_{i\in\agentset}\typepolicy_i(\vecAgent, j)\le1, & \quad\forall j\in\loc,~\forall\vecAgent\in\mathbb{Z}^{\numAgent+1}_{\ge0} \\ 
     \typepolicy_i(\vecAgent,j)=0, & \quad\forall j\in\loc,~\forall i\in\agentset,~\forall\vecAgent: \vecAgent_i=0 %
     \end{array}\right\}.
\end{align}

The first row of \cref{eq:admissible_policy} requires the admissible policy to define a probability measure over all actions. These actions consist of matching to one of the agent type $i\in\agentset$, together with the option of rejecting a job with probability $1 - \sum_{i \in \agentset} \typepolicy_i(\vecAgent, j)$---this formulation also admits idling policies, where the platform may reject jobs even when compatible agents are available. The second row prohibits assigning jobs to agent types with no available agents.

The following result provides a partial characterization of the optimal matching policy when agents' types are known to the platform (first-best), extending \Cref{prop:fb} to general compatibility graphs and capturing the principle of reserving flexibility.
\begin{proposition}
\label{prop:fb_general}
Among all policies in $\typepolicyset$, there exists an optimal non-idling policy that achieves maximum throughput and satisfies the following flexibility reservation (FR) property: for any job of type $j$, the policy dispatches it to an agent of type $i \in \agentset(j)$ before assigning it to an agent of type $i' \in \agentset(j)$ whenever $\loc(i) \subset \loc(i')$; equivalently, $\typepolicy_{i'}(\vecAgent, j) = 0$ if $\vecAgent_{i} > 0$.
\end{proposition}

\begin{remark}
\label{rmk:FR_nested}
\Cref{prop:fb_general} provides a complete characterization of the first-best matching policy under \emph{nested} compatibility graphs. A compatibility graph is nested if, for each job type $j \in \loc$, the compatible agents $\agentset(j) = \{i, i', i'', \ldots\}$ can be ordered so that $\loc(i) \subset \loc(i') \subset \loc(i'') \subset \cdots$. Under a first-best policy, jobs of type $j$ are dispatched according to this order. Both $G_1$ and $G_2$ in the numerical experiments of \Cref{sec:sim_experiments} are nested.
\end{remark}

\smallskip

To upper bound the maximum throughput attainable by any matching policy under any compatibility graph, we formulate the following linear program, adapted from \cite{aouad2022dynamic}. 
\begin{align}
\tag{\textsf{Fluid}} \label{eq:fluid}
\max\qquad &\sum_{i\in\agentset, j\in\loc(i)} x_{i,j}   \\ \notag
\textrm{s.t.}\qquad &\sum_{i\in\agentset(j)} x_{i,j} \le \mu_j,\quad \forall j\in\loc, \\ \label{eq:supply_constr}
&\sum_{j\in\loc(i)} x_{i,j} + y_i =\lambda_i,\quad \forall i\in\agentset, \\ \notag 
&\theta x_{i,j} \le \mu_j y_i, \quad \forall i\in\agentset,\:\forall j\in\loc(i), \\ \notag
& x_{i,j},\: y_i\ge0, \quad \forall i\in\agentset,\: \forall j\in\loc(i).
\end{align}

\begin{proposition}
\label{prop:fluid_UB}
The optimal value of \eqref{eq:fluid} is greater than or equal to the throughput of any policy $\typepolicy\in\typepolicyset$, for any compatibility graph and model parameters.
\end{proposition}

\smallskip

In addition to providing an upper bound, solving \eqref{eq:fluid} yields a natural heuristic for approximating the first-best policy under general compatibility graphs (used as the FR policy for graph $G_3$ in \Cref{sec:sim_experiments}). Let $(\gamma^\ast_i)_{i \in \agentset}$ denote the optimal dual values corresponding to constraints \eqref{eq:supply_constr}, where $\alpha^\ast_i$ represents the shadow price of an agent of type $i$. The induced matching policy follows a priority-list structure: each job of type $j$ is assigned to an agent of type $i \in \agentset(j)$ in increasing order of $(\gamma^\ast_i)_{i \in \agentset(j)}$, depending on availability (thus reserving the most high-value agents for last). If no agents in $\agentset(j)$ are available, the job is rejected.

\subsection{Flexibility Reservation with Fallback}
\label{appx:frfb}

We now extend our definition of the FRfb policy in \Cref{dfn:FR_fb} to general compatibility graphs, which are used in \Cref{sec:sim_experiments} for $G_2$ and $G_3$.

\smallskip

\begin{definition}[Flexibility Reservation with Fallback]\label{dfn:FR_fb_general}
The Flexibility Reservation with Fallback (FRfb) policy organizes a set of queues, one for each type of agent, $\mathcal{Q}^\RCR = \agentset$. It dispatches a job of type $j\in\loc$ following an order of $\rho^\RCR_j = \left(\mathcal{Q}_{j,1}, \mathcal{Q}_{j,2},\cdots,\mathcal{Q}_{j,K}\right)$ where $\mathcal{Q}_{j,1}, \mathcal{Q}_{j,2},\cdots,\mathcal{Q}_{j,K}$ form a partition of $\mathcal{Q}^\RCR$. In particular, $\mathcal{Q}_{j,1}, \mathcal{Q}_{j,2},\cdots,\mathcal{Q}_{j,K-1}$ form a partition of the set of compatible queues $\agentset(j)$ where $\mathcal{Q}_{j,k} = \{i\in \agentset(j): |\loc(i)|=k\},\:\forall k\in\{1,\cdots,K-1\}$, i.e., the set of compatible queues that can serve exactly $k$ types of jobs. On the other hand, the last subset $\mathcal{Q}_{j,K}$ is constructed as $\mathcal{Q}_{j,K} = \mathcal{Q}^\RCR\setminus\agentset(j)$, i.e., the subset of queues that are incompatible with serving type $j$ jobs. %
\end{definition}

\smallskip

We now give an example of \Cref{dfn:FR_fb_general} using a complete compatibility graph with three job types.

\smallskip

\begin{example}
Consider a complete compatibility graph with 3 %
types of jobs with $\loc=\{0,1,2\}$ and the set of agents (queues) $\agentset$ serving jobs $\{0\}, \{1\}, \{2\}, \{0,1\}, \{1,2\}, \{0,2\}, \{0,1,2\}$. %
We call them agent types (and queue numbers) $0$ through $6$ respectively. %
In this case,
\be
\orderedqueue^\RCR_0 =& \big(\{0\},\: \{3,5\},\: \{6\},\: \{1,2,4\}\big),\\
\orderedqueue^\RCR_1 =& \big(\{1\},\: \{3,4\},\: \{6\},\: \{0,2,5\}\big),\\
\orderedqueue^\RCR_2 =& \big(\{2\},\: \{4,5\},\: \{6\},\: \{0,1,3\}\big).
\ee
\end{example}

\end{APPENDICES}

\ECSwitch

\ECHead{\centering{\textsc{Online Appendix to Matching Queues, Flexibility and Incentives}}}

\section{Proofs and Auxiliary Results} %

\medskip

Given a job offered to a pool of $b$ agents, among whom $a$ are compatible with the job, let $\beta(a,b)\in[0,1]$ denote the probability that the job is successfully matched during this offer sequence. The following lemma summarizes several properties of $\beta(a,b)$ that will be useful in subsequent proofs.

\begin{lemma}
\label{lemma:beta}
Suppose a job is offered to a pool of $b\geq0$ agents in a uniformly random order, without replacement, among whom $a$ with $0\leq a\leq b$ are compatible, and after each rejection, the job survives (is available for another offer) with probability $p\in[0,1]$. Then, in equilibrium, $\beta(a,b)=\frac{a}{b} + \frac{b-a}{b}\cdot p\cdot \beta(a,b-1)$ for $b\geq 1$, and
\begin{itemize}
\item [(i)] $\beta(a,b)$ is non-increasing in $b$ and $\beta(a,a+c)$ is non-decreasing in $a$ for any fixed $c\geq 0$; 
\item [(ii)] $\beta(a,a+c)$ is convex in $c$; %
\item [(iii)] $\beta(0,b) = 0,\:\forall b\ge0$ and $\beta(b,b)=1,\:\forall b>0$.
\end{itemize}
\end{lemma}

The first bullet point states that increasing the total number of agents while keeping the number of compatible agents fixed decreases the probability of success, whereas increasing the number of compatible agents while holding the number of incompatible agents constant increases it. These conditions together imply that $\beta(a, b)$ is non-decreasing in $a$. The second bullet point asserts that the marginal negative impact of additional incompatible agents on the probability of success diminishes as their number increases. Finally, the last bullet point specifies that if all agents are compatible, the probability of success is one, and if none are compatible, the probability is zero.

\smallskip

\begin{proof}{\textbf{Proof of \Cref{lemma:beta}.}} 
We begin by showing the recursion that $\beta(a,b)$ satisfies. First, observe that in equilibrium agents always accept compatible jobs. Because a job is offered to the pool uniformly at random, it is received by a compatible agent (and leaves the system) with probability $a/b$. However, when received by an incompatible agent (which happens with probability $(b-a)/a$) the job survives with probability $p$, at which point it will be matched with probability $\beta(a,b-1)$ (offers are without replacement). 

Note that when there are no compatible agents, the probability of the job being successfully matched must be zero. Moreover, using the recursion it is easy to see that $\beta(b,b)=1$ when $b\geq 1$. This verifies {\it (iii)}.

Next, we verify {\it (i)}. First, note that for $a\geq 1$ we have that $\beta(a,a)=1$, hence $\beta(a,a+1)\leq \beta(a,a)$. Suppose that $\beta(a,b)\leq \beta(a,b-1)$ with $b-1\geq a$. Then, using the recursion, we have
$$
\beta(a,b+1) = \frac{a}{b+1}+\left(1-\frac{a}{b+1}\right) p \beta(a,b) \leq \frac{a}{b}+\left(1-\frac{a}{b}\right) p \beta(a,b-1)=\beta(a,b),
$$
where we have used the induction hypothesis and that $f(x) = x +(1-x)\cdot m$ is non-decreasing when $m\leq 1$. Next, note that for $c=0$, $\beta(a,a+c)=1$ for any $a\geq 1$ (and it equals zero for $a=0$), hence for $c=0$ we have that $\beta(a,a+c)$ is non-decreasing in $a$. Assume this property holds for $c-1$, then 

$$
\beta(a,a+c) = \frac{a}{a+c}+\left(1-\frac{a}{a+c}\right) p \beta(a,a + c-1) \leq \frac{a+1}{a+1+c}+\left(1-\frac{a+1}{a+1+c}\right) p \beta(a+1,a + c)=\beta(a+1,a+1+c),
$$
where we have used the induction hypothesis and that $f(x) = x +(1-x)\cdot m$ is non-decreasing when $m\leq 1$.

Finally, we verify {\it (ii)}. Note that for $a=0$, $\beta(a,a+c)$ is constant and thus convex. Let's consider $a\geq 1$ and $c\geq 0$. To ease notation, let $g(c) = \beta(a,a+c)$, $q_c = a/(a+c)$, and $r_c = pc/(a+c)$. Let's also define $\Delta_c = g(c)-g(c-1)$, then we need to show that $\Delta_{c+1} -\Delta_c\geq 0$. We have
\begin{flalign*}
\Delta_{c+1} -\Delta_c &= (q_{c+1} +r_{c+1}g(c) - g(c)) -(g(c)-g(c-1))\\
&= q_{c+1} +(r_{c+1}-2)g(c) + g(c-1)\\
&= q_{c+1} +(r_{c+1}-2)(q_c +r_c g(c-1)) + g(c-1)\\
&= q_{c+1} +(r_{c+1}-2)q_c  + ((r_{c+1}-2)r_c+1)g(c-1)\\
&= \underbrace{((r_{c+1}-2)r_c+1)}_{(A)}\underbrace{\left(g(c-1)
+\frac{q_{c+1} +(r_{c+1}-2)q_c}{((r_{c+1}-2)r_c+1)}
\right)}_{(B)}.
\end{flalign*}
We need to verify that $(A)$ and $(B)$ are non-negative. We have 
$$
(A) = \frac{a + a^2 + 2 a c (1 - p) + c (1 + c) (1 - p)^2}{(a+c)(a+c+1)}\geq 0.
$$
For $(B)$, we proceed by induction. Let $s_c$ be such that $(B)=g(c-1)+s_c$. For $c=1$, we have $g(0)+s_1 = 2 (1 - p)^2/(a^2 + a (3 - 2 p) + 2 (1 - p)^2)\geq 0$. Assuming that $g(c-2)+s_{c-1}\geq 0$, then $g(c-1)+s_c \geq q_{c-1}-r_{c-1}s_{c-1} +s_c$ and this is equal to
$$
\frac{2 a (a + c) (1 - p)^2}{(a + a^2 + 2 a c (1 - p) + c (1 + c) (1 - p)^2) (a(a-1)+(c-1)c(1-p)^2+2a(c(1-p)+p))}\geq 0,
$$
which concludes the proof. $\square$
\end{proof}

\medskip

\begin{proof}{\textbf{Proof of \Cref{lm:existence_eq}.}}
The proof follows similar steps as Theorem 1 in \citet{smith1979existence} for proving the existence of Wardrop equilibrium. Fix a policy $\policy$, we first give an equivalent definition of $\sigma$ being an equilibrium, aside from \cref{eq:equilibrium}.
\begin{equation*}
    \sum_{i\in \agentset,q\in \queueset}u_{i,q}^\policy(\joinprob)%
    (\joinprob_{i,q}-\tilde{\joinprob}_{i,q})\geq 0,\quad \forall \tilde{\sigma}\in \joinprobset(\queueset) ~ \Longleftrightarrow ~ \textrm{for all $i\in\agentset$ and $q\in\queueset$}, \:\:\joinprob_{i,q}>0 \: \Rightarrow\:  u_{i,q}^{\policy}(\joinprob)\:\ge\: u_{i,q'}^{\policy}(\joinprob),\:\: \forall q'\in\queueset.
\end{equation*}
We prove this equivalence:
\begin{itemize}
    \item ``$\Longleftarrow$''
    \begin{align*}
        &\sum_{i\in\agentset, q\in\queueset}u_{i,q}^{\policy}(\joinprob)\joinprob_{i,q} \\
        =&\sum_{i\in\agentset, q\in\queueset:\joinprob_{i,q}>0}u_{i,q}^\policy(\joinprob)\joinprob_{i,q} \\ 
        =&\sum_{i\in\agentset, q\in\queueset:\joinprob_{i,q}>0} \left(\max_{q'\in\queueset} u_{i,q'}^\policy(\joinprob)\right) \joinprob_{i,q} \\
        =&\sum_{i\in\agentset, q\in\queueset}\left(\max_{q'\in\queueset}u_{i,q'}^\policy(\joinprob)\right)\tilde{\joinprob}_{i,q} \\
        \ge&\sum_{i\in\agentset, q\in\queueset}u_{i,q}^\policy(\joinprob)\tilde{\joinprob}_{i,q}.
    \end{align*}
    
    \item ``$\Longrightarrow$'' 
    
    For this direction, we prove it by contradiction. Suppose that there exists a $\joinprob_{i^*,q^*}>0$ and $u_{i^*,q^*}^{\policy}(\joinprob)<u_{i^*,q}^\policy(\joinprob), \forall q\in\queueset$. Now consider a $\tilde{\joinprob}$ which is constructed as follows:
    \begin{equation*}
    \tilde{\joinprob}_{i,q} = \left.
      \begin{cases}
        \joinprob_{i,q}, & \forall i\neq i^*,\:\forall q, \\
        0, & i=i^*,\:q=q^*, \\
        \joinprob_{i,q}+\frac{\joinprob_{i^*,q^*}}{|\queueset|}, & i=i^*,\:\forall q\neq q^*.
      \end{cases}\right.
    \end{equation*}
    It can be seen that $\sum_{i\in\agentset,q\in \queueset}u_{i,q}^\policy(\joinprob)(\joinprob_{i,q}-\tilde{\joinprob}_{i,q})<0$, which reaches a contradiction.
\end{itemize}

This equivalent condition says that the vector $u^\policy(\joinprob)$ with components $u^{\policy}_{i,q}(\joinprob)$ is normal to the simplex
$\joinprobset(\queueset)$ at $\sigma$. Let $P_{\Sigma}(\cdot)$ be the projection operator onto $\joinprobset(\queueset)$. 
Define the mapping $f:\joinprobset(\queueset)\rightarrow \joinprobset(\queueset)$ by $f(\sigma)=P_{\Sigma}(\sigma-u^{\policy}_{i,q}(\joinprob))$, then the equilibrium condition can be cast as the fixed point equation
$f(\sigma)=\sigma$. Since, by assumption,
$u^\policy(\sigma)$ is continuous, and the projection operator onto a bounded convex set is continuous, then $f$ is a continuous function as the composition of continuous functions is continuous. By Brouwer's fixed-point theorem, we deduce that an equilibrium always exists. This completes the proof.~$\square$
\end{proof}

\medskip

Let $\vece_{i} \in \setZ^{|\agentset|}_{\geq 0}$ be the unit vector which is zero except for the $i\th$ element. Recall $\typepolicyset$ is the space of state-dependent policies under full information defined in \eqref{eq:admissible_policy}. For any policy $\typepolicy\in\typepolicyset$, $M^\typepolicy(T)$ is the total number of matches made up to time $T>0$ under policy $\typepolicy$ and let $\mathbb E[M^\typepolicy(T)\mid\vecAgent\supZero]$ be the expected number of matches up to time $T$ given that the initial number of agents on the platform is $A\in\mathbb Z_{\ge0}^{|\agentset|}$. The following lemma is critical in proving \Cref{prop:fb_general} (and \Cref{prop:fb}).

\begin{lemma}[Value of a More Flexible Agent] \label{lm:v_flexibility}
The following holds
\begin{equation}\label{eq:key-property}
    1 + \sup_{\phi\in\Phi}\:\E[M^{\typepolicy}(T)\mid \vecAgent\supZero = \vecAgent] \ge \sup_{\phi\in\Phi}\:\E[M^{\typepolicy}(T)\mid \vecAgent\supZero = \vecAgent + \vece_{i}] \geq \sup_{\phi\in\Phi}\:\E[M^{\typepolicy}(T)\mid \vecAgent\supZero = \vecAgent + \vece_{i^\prime}],
\end{equation}
for any time horizon $T\geq 0$, any $\agent \in \setZ^{|\agentset|}_{\geq 0}$ and any agent types $i\neq i^\prime\in\agentset$ such that $\loc(i)\supset\loc(i^\prime)$.
\end{lemma}

\begin{proof}{\textbf{Proof of \Cref{lm:v_flexibility}.}} 
Let $M^\phi_k(A,t)$ denote the expected number of matches under policy $\phi$ starting with state $A$ from time $t$ up until the time horizon $T$ expires or until we reach a total of $k$ events, whichever occurs first. Events include job arrivals, agent arrivals, and reneging. 

Denote by $M^\phi_\infty(A,0) = \E[M^\phi(T)\mid \vecAgent\supZero = \vecAgent]$ the expected number of matches under policy $\phi$ within $[0,T]$ without a constraint on the total number of events. We will show that $1 + \sup_{\phi\in\Phi} M^\phi_\infty(A,t) \ge \sup_{\phi\in\Phi} M^\phi_\infty(A+e_i,t) \ge \sup_{\phi\in\Phi} M^\phi_\infty(A+e_{i^\prime},t)$ for all $i\neq i^\prime\in \agentset$ such that $\loc(i)\supset\loc(i^\prime)$ and all $t\in[0,T]$.

For any finite $k$ and any state $A$, $|\sup_{\phi\in\Phi} M^\phi_\infty(A,t) - \sup_{\phi\in\Phi} M^\phi_k(A,t)|$ is bounded by the expected number of events (job arrival events in particular) in excess of $k$ over $[t,T]$. This upper bound vanishes as $k$ grows large.
Thus, $\lim_{k\to\infty} \sup_{\phi\in\Phi} M^\phi_k(A,t) = \sup_{\phi\in\Phi} M^\phi_\infty(A,t)$ for each $A$ and $t$. Thus, to show our claim, it is sufficient to show that $1 + \sup_{\phi\in\Phi} M^\phi_k(A,t) \ge \sup_{\phi\in\Phi}M^\phi_k(A+e_i,t) \ge \sup_{\phi\in\Phi} M^\phi_k(A+e_{i^\prime},t)$ for all finite $k$. We now show this via induction on $k$.

This induction hypothesis is trivially true for $k=0$ and so we turn our attention to show it is true for $k>0$ assuming that it is true for $k-1$.
Consider three systems at time $t$, one with state $A$, another with state $A+e_i$ and the third with state $A+e_{i^\prime}$.
We will refer to the three systems as $S_0$, $S_1$, and $S_2$ respectively.

We fix a sequence of agent and job arrivals within $[t,T]$ (i.e., we condition on the sequence of types and the arrival times of agents and jobs). Also consider a collection of independent exponential random variables representing the times until each of several possible reneging events occur in the system. Then the next event occurs randomly within the following categories:

\smallskip

\begin{enumerate}
\item reneging by an agent counted in $A$, from a type with a non-zero component in $A$;
\item an agent arrival, from one of the types in $\agentset$ (the arrival time of next agent is not random but fixed);
\item a job arrival, from one of the types in $\loc$ (the arrival time of next job is not random but fixed);
\item reneging by the agent not counted in $A$, which is of type $i$ in system $S_1$ and type $i^\prime$ in system $S_2$.
\end{enumerate}

\smallskip

We let random variable $t'$ represent the sum of $t$ and the minimum of these times. The event achieving this minimum determines the event that occurs next. If $T$ occurs before any of these times, then the time horizon expires before the next event occurs. Regarding reneging of the agents not counted in $A$, note that reneging occurs at the same rate regardless of agent type, so it is valid to couple them with a single event.

For each system $S_n, n=0,1,2$, let random variable $m(n)$ represent whether a match results from the next event, let $A'(n)$ represent the system state after the event occurs, and let $\Delta$ indicate whether the next event occurred in all three systems ($\Delta=1$) or just systems 1 and 2 ($\Delta=0$, which occurs on event type 4).
Thus,
\begin{equation*}
\begin{split}
\sup_{\phi\in\Phi}M^\phi_k(A,t) = \E\left[m(0) + \sup_{\phi\in\Phi} M^\phi_{k-\Delta}(A'(0),t')\right]
\ge \E\left[m(0) + \sup_{\phi\in\Phi} M^\phi_{k-1}(A'(0),t')\right],\\
\sup_{\phi\in\Phi}M^\phi_k(A+e_i,t) = \E\left[m(1) + \sup_{\phi\in\Phi} M^\phi_{k-1}(A'(1),t')\right],\\
\sup_{\phi\in\Phi} M^\phi_k(A+e_{i^\prime},t) = \E\left[m(2) + \sup_{\phi\in\Phi} M^\phi_{k-1}(A'(2),t')\right].
\end{split}
\end{equation*}

The expectations are taken over $m(n), A'(n)$ and $t'$. To show our result, we will consider the next event on a case-by-case basis to show that with probability one,
\begin{equation}
\label{eq:alt-proof}
1 + m(0) + \sup_{\phi\in\Phi}M^\phi_{k-1}(A'(0),t')
\ge  m(1) + \sup_{\phi\in\Phi}M^\phi_{k-1}(A'(1),t')
\ge  m(2) + \sup_{\phi\in\Phi}M^\phi_{k-1}(A'(2),t').
\end{equation}
If $t'>T$, then $\sup_{\phi\in\Phi} M^\phi_{k-1}(A'(n),t')=0, m(n) = 0$ for all $n$, automatically verifying this expression. Thus, it is sufficient to focus on $t'\le T$.

\smallskip

\emph{\underline{Case 1}}: The next event is reneging by an agent counted in $A$ or an agent arrival.  In this case, let $A''$ represent $A$ modified by this event, so that $A'(0) = A''$, $A'(1) = A'' + e_i$ and $A'(2) = A''+ e_{i^\prime}$.
Also, $m(n) = 0$ for all $n$.
Then, by the induction hypothesis,
$1 + \sup_{\phi\in\Phi} M^\phi_{k-1}(A'', t')
\ge \sup_{\phi\in\Phi} M^\phi_{k-1}(A'' + e_i, t')
\ge \sup_{\phi\in\Phi} M^\phi_{k-1}(A'' + e_{i^\prime}, t')$,
showing \eqref{eq:alt-proof}.

\medskip

\emph{\underline{Case 2}}: The next event is reneging by an agent not counted in $A$. 
In this case, $A'(0) = A'(1) = A'(2) = A$ and $m(0) = m(1) = m(2) = 0$.
Thus, \eqref{eq:alt-proof} is verified directly, with $1+ \sup_{\phi\in\Phi} M^\phi_{k-1}(A,t')\ge \sup_{\phi\in\Phi}M^\phi_{k-1}(A,t')$.

\medskip

\emph{\underline{Case 3}}: The next event is a job type that is not compatible with both agents of types $i$ and $i^\prime$. In this case, we can couple the matching decision so that all three systems match the job with an agent in $A$ or all three systems choose not to match. In either case, the induction step holds.

\medskip

\emph{\underline{Case 4}}: The next event is a job type that is compatible with agent type $i'$ thus compatible with type $i$ as well. We consider the following subcases. 

\begin{itemize}
\item[(i)] If $S_0$ and $S_1$ (or $S_1$ and $S_2$) make the same matching decision, then the inequality involving the comparison of $S_0$ and $S_1$ (or $S_1$ and $S_2$) in \eqref{eq:alt-proof} holds. In the next two subcases we consider $S_0$ and $S_1$, $S_1$ and $S_2$ make different matching decisions respectively. 

\smallskip

\item[(ii)] If $S_0$ and $S_1$ make different matching decisions, consider the first scenario where $S_0$ decides to match an agent of type $j$ and $S_1$ decides to match an agent of type $i$ and $j\neq i$. The first part of the inequality in \eqref{eq:alt-proof} can be verified,
\begin{align}
1 + m(0) + \sup_{\phi\in\Phi}M^\phi_{k-1}(A'(0),t') =& 1 + 1 + \sup_{\phi\in\Phi} M_{k-1}^\phi(A - e_j, t') \notag \\
\ge& 1 + \sup_{\phi\in\Phi} M_{k-1}^\phi(A, t')  \tag{optimality of $\sup_{\phi\in\Phi} M_k^\phi(A, t)$} \\
=& 1 + \sup_{\phi\in\Phi} M_{k-1}^\phi(A + e_i - e_i, t') = m(1) + \sup_{\phi\in\Phi}M^\phi_{k-1}(A'(1),t'). \notag
\end{align}
Now consider the second scenario where $S_0$ decides to match an agent of type $j$ and $S_1$ decides to match an agent of type $j^\prime\neq j\neq i$. %
\begin{align}
1 + m(0) + \sup_{\phi\in\Phi}M^\phi_{k-1}(A'(0),t') = &1 + 1 + \sup_{\phi\in\Phi} M_{k-1}^\phi(A - e_j, t') \notag \\
\ge& 1 + 1 + \sup_{\phi\in\Phi} M_{k-1}^\phi(A - e_{j'}, t')  \tag{optimality of $\sup_{\phi\in\Phi} M_k^\phi(A, t)$} \\
\ge& 1 + \sup_{\phi\in\Phi} M_{k-1}^\phi(A + e_i - e_{j^\prime}, t') \tag{induction hypothesis at $k-1$}\\
=& m(1) + \sup_{\phi\in\Phi}M^\phi_{k-1}(A'(1),t'). \notag
\end{align}
For the third scenario where $S_0$ decides not to match but $S_1$ decides to match with a type $i$ agent, trivially,
\be
1 + m(0) + \sup_{\phi\in\Phi}M^\phi_{k-1}(A'(0),t') =& 1 + \sup_{\phi\in\Phi} M_{k-1}^\phi(A, t') \\
\ge& 1 + \sup_{\phi\in\Phi} M_{k-1}^\phi(A + e_i - e_i, t') = m(1) + \sup_{\phi\in\Phi}M^\phi_{k-1}(A'(1),t').
\ee
For a fourth scenario where $S_0$ decides not to match but $S_1$ decides to match with a type $j\neq i$ agent, 
\begin{align}
1 + m(0) + \sup_{\phi\in\Phi}M^\phi_{k-1}(A'(0),t') =& 1 + \sup_{\phi\in\Phi} M_{k-1}^\phi(A, t') \notag \\
\ge& 1 + 1 + \sup_{\phi\in\Phi} M_{k-1}^\phi(A - e_j, t') \tag{optimality of $\sup_{\phi\in\Phi} M_k^\phi(A, t)$} \\
\ge& 1 + \sup_{\phi\in\Phi} M_{k-1}^\phi(A + e_i - e_j, t') \tag{induction hypothesis at $k-1$} \\
=& m(1) + \sup_{\phi\in\Phi}M^\phi_{k-1}(A'(1),t'). \notag
\end{align}
For the last scenario where $S_0$ decides to match with an agent of type $j$ but $S_1$ decides not to match,
\begin{align}
1 + m(0) + \sup_{\phi\in\Phi}M^\phi_{k-1}(A'(0),t') =&1 + 1 + \sup_{\phi\in\Phi} M_{k-1}^\phi(A - e_j, t') \notag \\
\ge& 1 + \sup_{\phi\in\Phi} M_{k-1}^\phi(A, t') \tag{optimality of $\sup_{\phi\in\Phi} M_k^\phi(A, t)$} \\
\ge& \sup_{\phi\in\Phi} M_{k-1}^\phi(A+e_i, t') \tag{induction hypothesis at $k-1$} \\
=& m(1) + \sup_{\phi\in\Phi}M^\phi_{k-1}(A'(1),t'). \notag
\end{align}

\smallskip

\item[(iii)] If $S_1$ and $S_2$ make different matching decisions, consider the first scenario where $S_1$ decides to match with an agent of type $j$ while $S_2$ decides to match to an agent of type $j^\prime\neq j$, and assumes that $S_1$ contains a type $j'$ agent. The second part of the inequality in \eqref{eq:alt-proof} can be verified,
\begin{align}
m(1) + \sup_{\phi\in\Phi}M^\phi_{k-1}(A'(1),t') =& 1 + \sup_{\phi\in\Phi}M^\phi_{k-1}(A + e_i - e_j,t')  \notag\\
\ge& 1 + \sup_{\phi\in\Phi}M^\phi_{k-1}(A + e_i - e_{j'},t') \tag{optimality of $\sup_{\phi\in\Phi} M_k^\phi(A, t)$} \\
\ge& 1 + \sup_{\phi\in\Phi}M^\phi_{k-1}(A + e_{i'} - e_{j'},t') \tag{induction hypothesis at $k-1$} \\
=&m(2) + \sup_{\phi\in\Phi}M^\phi_{k-1}(A'(2),t'). \notag
\end{align}

Consider the second scenario where $S_1$ decides to match with an agent of type $j$ while $S_2$ decides to match to an agent of type $j^\prime\neq j$, and assumes that $S_1$ does not contain a type $j'$ agent. Then it must be that $j'=i'$. This suggests that the incoming job is compatible with a type $i$ agent as well because $\loc(i)\supset\loc(i')$.
\begin{align}
m(1) + \sup_{\phi\in\Phi}M^\phi_{k-1}(A'(1),t') =& 1 + \sup_{\phi\in\Phi}M^\phi_{k-1}(A + e_i - e_j,t')  \notag\\
\ge& 1 + \sup_{\phi\in\Phi}M^\phi_{k-1}(A + e_i - e_i,t') \tag{optimality of $\sup_{\phi\in\Phi} M_k^\phi(A, t)$} \\
=& 1 + \sup_{\phi\in\Phi}M^\phi_{k-1}(A + e_{i'} - e_{i'},t') \notag \\
=&m(2) + \sup_{\phi\in\Phi}M^\phi_{k-1}(A'(2),t'). \notag
\end{align}

For the third scenario where $S_1$ decides to match with an agent of type $j$ while $S_2$ decides not to match,
\begin{align}
m(1) + \sup_{\phi\in\Phi}M^\phi_{k-1}(A'(1),t') =& 1 + \sup_{\phi\in\Phi}M^\phi_{k-1}(A + e_i - e_j,t')  \notag\\
\ge& \sup_{\phi\in\Phi}M^\phi_{k-1}(A + e_i,t') \tag{optimality of $\sup_{\phi\in\Phi} M_k^\phi(A, t)$} \\
\ge& \sup_{\phi\in\Phi}M^\phi_{k-1}(A + e_{i'},t')  \tag{induction hypothesis at $k-1$} \\
=&m(2) + \sup_{\phi\in\Phi}M^\phi_{k-1}(A'(2),t'). \notag
\end{align}

For the fourth scenario where $S_1$ decides not to match while $S_2$ decides to match with an agent of type $j'$ and $S_1$ contains a type $j'$ agent,
\begin{align}
m(1) + \sup_{\phi\in\Phi}M^\phi_{k-1}(A'(1),t') =& \sup_{\phi\in\Phi}M^\phi_{k-1}(A + e_i,t')  \notag\\
\ge& 1 + \sup_{\phi\in\Phi}M^\phi_{k-1}(A + e_i - e_{j'},t') \tag{optimality of $\sup_{\phi\in\Phi} M_k^\phi(A, t)$} \\
\ge& 1 + \sup_{\phi\in\Phi}M^\phi_{k-1}(A + e_{i'} - e_{j'},t')  \tag{induction hypothesis at $k-1$} \\
=&m(2) + \sup_{\phi\in\Phi}M^\phi_{k-1}(A'(2),t'). \notag
\end{align}

For the last scenario where $S_1$ decides not to match while $S_2$ decides to match with an agent of type $j'$ and $S_1$ does not contain a type $j'$ agent, this indicates that $j'=i'$. This further suggests that the incoming job is compatible with a type $i$ agent as well
because $\loc(i)\supset\loc(i')$.
\begin{align}
m(1) + \sup_{\phi\in\Phi}M^\phi_{k-1}(A'(1),t') =& \sup_{\phi\in\Phi}M^\phi_{k-1}(A + e_i,t')  \notag\\
\ge& 1 + \sup_{\phi\in\Phi}M^\phi_{k-1}(A + e_i - e_i,t') \tag{optimality of $\sup_{\phi\in\Phi} M_k^\phi(A, t)$} \\
=& 1 + \sup_{\phi\in\Phi}M^\phi_{k-1}(A + e_{i'} - e_{i'},t') \notag \\
=&m(2) + \sup_{\phi\in\Phi}M^\phi_{k-1}(A'(2),t'). \notag
\end{align}

\end{itemize}

\emph{\underline{Case 5}}: The next event is a job type that is compatible with agent type $i$ but not with the type $i'$. This essentially follows the same proof as Case $4$ with the same subcases (i) and (ii). For subcase (iii), we simply remove the discussion of the scenario indicating that $S_2$ matches the incoming job with a type $i'$ agent. This completes the proof.~ $\square$

\end{proof}

\medskip

\begin{proof}{\textbf{Proof of \Cref{prop:fb_general}.}} 
Suppose that there exists an optimal policy $\phi^\prime$ that does not satisfy the FR property. %
Then at some moment $t$, for two types of agents $i\neq i^\prime\in\agentset$ such that $\loc(i)\supset\loc(i^\prime)$ with $\agent_i>0$ and $\agent_{i^\prime}>0$, policy $\phi^\prime$ assigns an incoming job to a compatible type $i$ agent instead of a compatible type $i^\prime$ agent. Now consider an alternative action that assigns the same job to a type $i^\prime$ agent. By the second inequality of \Cref{lm:v_flexibility}, continuing running the optimal policy from this moment onward for any time duration $T$ weakly improves throughput in expectation. This suggests that any optimal policy that does not satisfy the FR property can be modified to satisfy such a property without decreasing its throughput over any time horizon. Similarly, suppose that there exists another optimal policy $\phi''$ that does not match an incoming job even if there are compatible agents available. Suppose that this event occurs at some time $t$, and consider an alternative action that assigns the job to any compatible agent. Similarly, by the first inequality of \Cref{lm:v_flexibility}, continuing running the optimal policy from this moment onward for any time duration $T$ does not decrease the throughput in expectation. This indicates that any idling optimal policy can be made non-idling without deteriorating its performance. This concludes the proof.~$\square$
\end{proof}

\medskip

\begin{proof}{\textbf{Proof of \Cref{prop:fb}.}} 
\Cref{prop:fb} is a special case of \Cref{prop:fb_general} as any matching policy $\pi=(\queueset, \rho)$ defined in \Cref{subsec:matching} can be represented by a policy $\phi\in\Phi$ when agent types are not private. Its proof thus follows from the proof of \Cref{prop:fb_general}. $\square$
\end{proof}

\medskip

\begin{proof}{\textbf{Proof of \Cref{prop:two_type_more_switch}.}}
Recall that $W^{\FLQR}_{0,q}(\joinprob)$ and $W^{\FLQ}_{0,q}(\joinprob)$ denote the steady-state virtual waiting time of a type 0 agent in queue $q$ under the FRfb and FR policies, respectively, when the queue-joining strategy is $\joinprob$. Since we restrict attention to two types, $\loc=\{0,1\}$, the strategy profile $\joinprob$ can be represented by $\joinprob_{0,1}$. For convenience, we slightly abuse notation and write $W_{0,q}^\policy(\sigma_{0,1})$ in place of $W_{0,q}^\policy(\sigma)$.  

Note that $W^{\FLQ}_{0,1}(\joinprob_{0,1})$ is stochastically increasing in $\joinprob_{0,1}$; that is, $W^{\FLQ}_{0,1}(\joinprob_{0,1}) \ge_{\mathrm{st}} W^{\FLQ}_{0,1}(\joinprob'_{0,1})$ for all $\joinprob_{0,1}\ge\joinprob'_{0,1}$, where $\ge_{\mathrm{st}}$ denotes first-order stochastic dominance. This monotonicity arises because, under the ACR policy, increasing $\joinprob_{0,1}$ raises the arrival rate of agents to queue~1 without affecting the job arrival rate to that queue. This further implies that $u^{\FLQ}_{0,1}(\joinprob_{0,1}) = (v+c/\theta)\mathbb E[e^{-\theta W^{\FLQ}_{0,1}(\joinprob_{0,1})}] \le u^{\FLQ}_{0,1}(\joinprob'_{0,1})=(v+c/\theta)\mathbb E[e^{-\theta W^{\FLQ}_{0,1}(\joinprob'_{0,1})}],\:\forall \joinprob_{0,1}\ge \joinprob'_{0,1}$. Similarly, for any $\joinprob_{0,1}\in[0,1]$  we must have $W^{\FLQ}_{0,1}(\joinprob_{0,1})\geq_{\textrm{st}} W^{\FLQR}_{0,1}(\joinprob_{0,1})$. To see why this is true, consider a sample path argument where we tag a type 0 agent arriving to queue 1 in two systems---one runs under the FRfb policy and the other runs under the FR policy---that encounter the same queue length in queue 1. Let us further consider the time it takes the tagged agent to be matched. Under the FR policy, the tagged agent can only be matched to some job of type 1 but not type 0. In contrast, under the FRfb policy, the tagged agent has a chance to be matched to a job of type 0 as long as queue 0 is empty, on top of being possibly matched to type 1 job. In turn, the tagged agent on any sample path will wait less under the FRfb policy than under the FR policy. That is, $W^{\FLQ}_{0,1}(\joinprob_{0,1})\geq_{\textrm{st}} W^{\FLQR}_{0,1}(\joinprob_{0,1})$ for all $\joinprob_{0,1}$, which further implies $u^{\FLQ}_{0,1}(\joinprob_{0,1})\le u^{\FLQR}_{0,1}(\joinprob_{0,1})$ for all $\joinprob_{0,1}$. Finally, we argue that $W^{\FLQ}_{0,0}(\joinprob_{0,1}) =_{\textrm{st}} W^{\FLQR}_{0,0}(\joinprob_{0,1})$ for all $\joinprob_{0,1}\in [0,1]$. Similar to the sample path argument in the previous paragraph, consider a tagged type $0$ agent arriving to queue 0, we show that this arrival experiences the same waiting time in systems under two policies for each sample path. Indeed, simply notice that the allocation of jobs to agents in queue 0, when starting from the same state, is the same under both policies. The only difference occurs when queue 0 is empty, but at that moment, the tagged agent has already left the system. This further implies $u^{\FLQ}_{0,0}(\joinprob_{0,1}) = u^{\FLQR}_{0,0}(\joinprob_{0,1})$ for all $\joinprob_{0,1}$. 

\begin{figure}[htbp]
\centering
\scalebox{0.65}{\begin{tikzpicture}[baseline=0pt]
\draw[line width=0.6mm,->] (0,0) -- (10.0,0);
\draw[line width=0.6mm,->] (0,0) -- (0,7.5);
\node at (10.0,-0.5){\Large $\sigma_{0,1}$};
\node at (-1.1,7.5){\large Utility};

\draw[line width=0.3mm] (0,3.5) parabola (9,2);

\node at (10,2.1){$u^{\FLQ}_{0,1}(\joinprob_{0,1})$};

\draw[line width=0.3mm, dashed] (0,4.2) .. controls (4.0,4.1) and (8,3) .. (9,3.5);

\node at (10.1,3.5){$u^{\FLQR}_{0,1}(\joinprob_{0,1})$};

\node at (-1.1,4.2){$u^{\FLQR}_{0,1}(0)$};

\node at (-1,3.5){$u^{\FLQ}_{0,1}(0)$};

\draw[line width=0.3mm, color=blue] (0,5.5) .. controls (2,5.0) .. (9,6.2);
\draw[line width=0.3mm, color=blue] (0,3.2) .. controls (2,2.7) .. (9,3.9);
\draw[line width=0.3mm, color=blue] (0,1.0) .. controls (2,0.5) .. (9,1.7);

\draw (0.2,3.5) -- (-0.2,3.5);

\draw (0.2,4.2) -- (-0.2,4.2);

\node[color=blue] at (4.0,6.1){$u^{\FLQR}_{0,0}(\joinprob_{0,1})=u^{\FLQ}_{0,0}(\joinprob_{0,1})$};
\node[color=blue] at (8.5,6.5){\large Case 3};
\node[color=blue] at (8.5,4.2){\large Case 2};
\node[color=blue] at (8.5,1.2){\large Case 1};
\end{tikzpicture}}
\vspace{3mm}\caption{Illustration for the Proof of \Cref{prop:two_type_more_switch}.}
\label{fig:lm_4}
\end{figure} 

\Cref{fig:lm_4} illustrates these utilities which reflect the relationship we just proved above. $u_{0,0}^{\FLQR}(\joinprob_{0,1})$ collapses with $u_{0,0}^{\FLQ}(\joinprob_{0,1})$ for all $\joinprob_{0,1}$ (blue curve), while $u_{0,0}^{\FLQ}(\joinprob_{0,1})\ge u_{0,0}^{\FLQR}(\joinprob_{0,1})$ for all $\joinprob_{0,1}$ (black solid and dashed curves). To conclude the proof, we consider three cases below based on the relative value of $u_{0,0}^{\FLQR}(\joinprob_{0,1})$ (and $u_{0,0}^{\FLQ}(\joinprob_{0,1})$), illustrated by the three different blue curves in \Cref{fig:lm_4}.
\smallskip
\begin{itemize}
    
    \item[\emph{\underline{Case 1}}:]  $u^{\FLQ}_{0,0}(\joinprob_{0,1}) = u^{\FLQR}_{0,0}(\joinprob_{0,1}) \le u_{0,1}^{\FLQ}(\joinprob_{0,1}),\:\forall\joinprob_{0,1}$:
    In such a case, %
    we can deduce that the only equilibrium under the FR and the FRfb policies are $\joinprobFLQ_{0,1}=\joinprobFLQR_{0,1}=1$, respectively.

    \smallskip
    
    \item[\emph{\underline{Case 2}}:]  $u_{0,1}^{\FLQ}(1)< u^{\FLQ}_{0,0}(0) = u^{\FLQR}_{0,0}(0) < u_{0,1}^{\FLQ}(0)$, there exists $\joinprob^*_{0,1}\in [0,1]$ such that 
    $u^{\FLQ}_{0,0}(\joinprob_{0,1}^*) = u^{\FLQ}_{0,1}(\joinprob_{0,1}^*)$ and $u^{\FLQ}_{0,0}(\joinprob_{0,1}) < u^{\FLQ}_{0,1}(\joinprob_{0,1})$ for all $\joinprob_{0,1}<\joinprob^\ast_{0,1}$:
     In such a case, $\joinprobFLQ_{0,1}=\joinprob_{0,1}^*$ must be an equilibrium, while clearly $\joinprobFLQR_{0,1}\ge\joinprob_{0,1}^*$ because for all $\joinprob_{0,1}<\joinprob_{0,1}^*$ we have $u^{\FLQR}_{0,0}(\joinprob_{0,1}) = u^{\FLQ}_{0,0}(\joinprob_{0,1}) < u^{\FLQ}_{0,1}(\joinprob_{0,1})<u^{\FLQR}_{0,1}(\joinprob_{0,1})$.

     \smallskip
    
    \item[\emph{\underline{Case 3}}:]  $u^{\FLQ}_{0,0}(0) = u^{\FLQR}_{0,0}(0)\ge u_{0,1}^{\FLQ}(0)$: 
    In such a case, $\joinprobFLQ_{0,1}=0$ is clearly one equilibrium under the FR policy, thus we have $\joinprobFLQR_{0,1}\ge0=\joinprobFLQ_{0,1}$ for some equilibria under the two policies. This concludes the proof. $\square$
\end{itemize}
\end{proof}

\medskip

We next present a similar lemma to \Cref{lm:v_flexibility} that concerns the value of (non)idling under random.
\begin{lemma}
\label{lm:v_flexibility_2}
For any compatibility graph, under the random policy, 
\begin{equation}\label{eq:key-property}
1+ \E[M^{\normalfont{\SMQ}}(T)\mid \agent\supZero = \agent] \geq \E[M^{\normalfont{\SMQ}}(T)\mid \agent\supZero = \agent + \vece_{i}], %
\end{equation}
for all $i\in\agentset$, any time horizon $T\geq 0$ and any $\agent\in\setZ^{|\agentset|}_{\geq 0}$.
\end{lemma}
\begin{proof}{Proof.}

As we fix the policy to be the random policy, we suppress the policy notation in the rest of the proof.
Let $M_k(A,t)$ denote the expected number of matches starting with state $A$ from time $t$ up until the time horizon $T$ expires or until we reach a total of $k$ events, whichever occurs first. Events include job arrivals, agent arrivals, and reneging. As the proof of \Cref{lm:v_flexibility},

For any finite $k$ and any state $A$, $|M_\infty(A,t) - M_k(A,t)|$ is bounded by the expected number of events (job arrival events in particular) in excess of $k$ over $[t,T]$. This upper bound vanishes as $k$ grows large.
Thus, $\lim_{k\to\infty} M_k(A,t) = M_\infty(A,t)$ for each $A$ and $t$. Thus, to show our claim, it is sufficient to show that $1+M_k(A,t)\ge M_k(A+e_i,t)$ for all finite $k$ and all $i\in\agentset$. We now show this via induction on $k$.

This induction hypothesis is trivially true for $k=0$ and so we turn our attention to showing it is true for $k>0$ assuming that it is true for $k-1$. Consider two systems at time $t$, one with state $A$, and the other with state $A+e_i$. We will refer to the two systems as $S_0$ and $S_1$ respectively.

We fix a sequence of agent and job arrivals within $[t,T]$ (i.e., we condition on the sequence of types and the arrival times of agents and jobs). Also consider a collection of independent exponential random variables representing the times until each of several possible reneging events occur in the system. Then the next event occurs randomly within the following categories:
\begin{enumerate}
\item reneging by an agent counted in $A$, from each type with a non-zero component in $A$;
\item an agent arrival, from each of the $\numLoc$ types (the arrival time of next agent is not random but fixed);
\item a job arrival, from each of the $\numLoc$ types (the arrival time of next job is not random but fixed);
\item reneging by the agent not counted in $A$, which is of type $i$ in system $S_1$.
\end{enumerate}
We let random variable $t'$ represent the sum of $t$ and the minimum of these times. The random variable achieving this minimum determines the event that occurs next. If $T$ occurs before any of these times, then the time horizon expires before the next event occurs.

For either system $S_n,\:n=0,1$, let random variable $m(n)$ represent whether a match results from the next event, let $A'(n)$ represent the system state after the event occurs. Let $\Delta$ indicate whether the next event occurs in both systems ($\Delta=1$) or just system $S_1$ ($\Delta=0$, which occurs on event type 4). Thus,
\begin{equation*}
\begin{split}
M_k(A,t) = \E[m(0) + M_{k-\Delta}(A'(0),t')]
\ge \E[m(0) + M_{k-1}(A'(0),t')],\\
M_k(A+e_i,t) = \E[m(1) + M_{k-1}(A'(1),t')].
\end{split}
\end{equation*}

The expectation is taken over $m(n), A'(n)$ and $t'$. To show our result, we will condition on the next event on a case-by-case basis to show that
\begin{equation}
\label{eq:lm4}
1 + m(0) + M_{k-1}(A'(0),t')
\geq_{\textrm{st}} m(1) + M_{k-1}(A'(1),t'),
\end{equation}
where the comparison represents first-order stochastic dominance. If $t'>T$, then $M_{k-1}(A'(n),t')=0, m(n) = 0$ for all $n$, automatically verifying this expression. Thus, it is sufficient to focus on $t'\le T$.

\medskip

\emph{\underline{Case 1}}: The next event is reneging by an agent counted in $A$ or an agent arrival. In this case, let $A''$ represent $A$ modified by this event, so that $A'(0) = A''$ and $A'(1) = A'' + e_i$.
Also, $m(0)=m(1)=0$. Then, by the induction hypothesis,
$1+M_{k-1}(A'', t')
\ge M_{k-1}(A'' + e_i, t')$,
showing \eqref{eq:lm4}.

\medskip

\emph{\underline{Case 2}}: The next event is a job of type $j\notin\agentset(i)$. In this case, $m(0)$ is a Bernoulli random variable with probability $\beta(\sum_{i'\in \agentset(j)} A_{i'}, \sum_{i'\in\agentset}A_{i'})$ being one, and $m(1)$ is a Bernoulli random variable with probability $\beta(\sum_{i'\in \agentset(j)} A_{i'}, \sum_{i'\in\agentset}A_{i'}+1)$ being one. We have $\beta(\sum_{i'\in \agentset(j)} A_{i'}, \sum_{i'\in\agentset}A_{i'})\ge \beta(\sum_{i'\in \agentset(j)} A_{i'}, \sum_{i'\in\agentset}A_{i'}+1)$ by \Cref{lemma:beta}. Thus, there exists a coupling of the randomness of $m(0)$ and $m(1)$ such that: (1) $m(0)=m(1)=1$, both systems match to the same agent; (2) $m(0)=1, m(1)=0$, $S_0$ matches the job to an agent in $A$ while $S_1$ fails to match; (3) $m(0)=m(1)=0$, both systems fail to match the job. %
We now discuss case by case:

\begin{enumerate}
    \item when $m(0)=m(1)=1$, and suppose $i'\in\agentset(j)$ is the agent type both systems match this type $j$ job to. We have $A'' = A - e_{i'}$ representing $A$ modified by this event so that $A'(0) = A''$ and $A'(1) = A'' + e_i$. Then, by the induction hypothesis,
    $2 + M_{k-1}(A'', t')
    \ge 1  + M_{k-1}(A'' + e_i, t')$,
    showing \eqref{eq:lm4};

    \item when $m(0)=1, m(1)=0$, $\mathcal{S}_0$ matches this job to a type $i'\in\agentset(j)$ agent counted in $A$. We have $A'' = A - e_{i'}$ representing $A$ modified by this event so that $A'(0) = A''$ and $A'(1) = A + e_{i'} + e_i$. Then, by the induction hypothesis,
    $2 + M_{k-1}(A'', t')
    \ge M_{k-1}(A'' + e_{s'} + e_i, t'),$
    showing \eqref{eq:lm4};

    \item when $m(0) = m(1) =0$, we have $A'(0)=A$ and $A'(1) = A + e_i$. Then, by the induction hypothesis,
    $1 + M_{k-1}(A, t')
    \ge  M_{k-1}(A + e_{i}, t')$,
    showing \eqref{eq:lm4}.
\end{enumerate}

\medskip

\emph{\underline{Case 3}}: The next event is a job of type $j\in\agentset(i)$. In this case, $m(0)$ is a Bernoulli random variable with probability $\beta(\sum_{i'\in \agentset(j)} A_{i'}, \sum_{i'\in\agentset}A_{i'})$ being one, $m(1)$ is a Bernoulli random variable with probability $\beta(\sum_{i'\in \agentset(j)} A_{i'} + 1, \sum_{i'\in\agentset}A_{i'}+1)$ being one. We have $\beta(\sum_{i'\in \agentset(j)} A_{i'} + 1, \sum_{i'\in\agentset}A_{i'} + 1)\ge \beta(\sum_{i'\in \agentset(j)} A_{i'}, \sum_{i'\in\agentset}A_{i'})$ by \Cref{lemma:beta}. Thus, there exists a coupling of the randomness of $m(0), m(1)$ such that: (1) $m(0)=m(1)=1$, both systems match to the same agent, or $S_1$ matches the job to the additional type $0$ agent $e_i$ not counted in $A$, while $S_0$ matches an agent in $A$; (2) $m(0)=0, m(1)=1$, $S_1$ matches the job to the additional type $i$ agent not counted in $A$ while $S_0$ fails to match the job; (3) both systems fail to match the job. We now discuss all these outcomes case by case:

\begin{enumerate}
    \item when $m(0)=m(1)=1$, and both systems match to the same agent. Suppose $i'\in\agentset(j)$ is the agent type both systems match this type $j$ job to. We have $A'' = A - e_{i'}$ represent $A$ modified by this event so that $A'(0) = A'', A'(1) = A'' + e_i$. Then, by the induction hypothesis,
    $2 + M_{k-1}(A'', t')
    \ge 1  + M_{k-1}(A'' + e_i, t')$,
    showing \eqref{eq:lm4};

    \item when $m(0)=m(1)=1$, and $S_1$ matches the job to the additional type $i$ agent not counted in $A$, $S_0$ matches an agent of type $i'$ in $A$. We have $A'(0) = A-e_i, A'(1) = A$. Then, by the induction hypothesis,
    $2 + M_{k-1}(A-e_i, t')
    \ge 1 + M_{k-1}(A, t')$,
    showing \eqref{eq:lm4};

    \item when $m(0)=0, m(1)=1$, $S_1$ matches the job to the additional type $i$ agent not counted in $A$, while $S_0$ fails to match. We have $A'(0) = A'(1) = A$. Then, \eqref{eq:lm4} can be directly shown;
    
    \item when $m(0) = m(1) = 0$, we have $A'(0)=A$ and $A'(1) = A + e_i$. Then, by the induction hypothesis,
    $1 + M_{k-1}(A, t')
    \ge  M_{k-1}(A + e_{i}, t')$,
    showing \eqref{eq:lm4}.
\end{enumerate}

\medskip

\emph{\underline{Case 4}}: The next event is reneging by the agent not counted in $A$. 
In this case, $A'(0) = A'(1) = A$ and $m(0) = m(1) = 0$.
Thus, \eqref{eq:lm4} is verified directly, with $1+M_{k-1}(A,t')\ge M_{k-1}(A,t')$.

\smallskip

This concludes the proof of \Cref{lm:v_flexibility_2}.~$\square$

\end{proof}

\medskip

\begin{proof}{\textbf{Proof of \Cref{thm:robust_improvement_FLQR}}.}

Fix agent type $i\neq0$. Under the FRfb policy, type $i$ agents can only be matched in queue $i$ or queue $0$ with type $j=i$ jobs. In any other queue their virtual waiting time would be infinity. Since the FRfb policy only attempts to match type $i$ jobs with agents in queue $1$ when queue $i$ becomes empty, a type $i$ agent would wait less in queue $i$ than in queue $0$. This holds regardless of the strategy that flexible agents play. 

\smallskip

We now define two sets of state representations for the FRfb policy and the random policy respectively.
\begin{itemize}
    \item The random policy: $\vecAgent=\left(\agent_i\right)_{i\in\agentset}$, number of agents of each type as all agents join the same queue.  
    
    \item The FRfb policy: $\vecQueue=\left(\{\queue_{0,i}\}_{i\in\agentset}, \{\queue_{i}\}_{i\in\agentset\setminus\{0\}}\right)$ where $\queue_{0,i}$ is the number of type $1$ agents in queue $i$ and $\queue_i$ is the number of type $i\neq0$ agents in queue $i$. %
\end{itemize}

We first prove the result for a fixed time horizon $T>0$.  We want to prove the following inequality holds for all $T\ge0, \joinprob\in\joinprobset(\mathcal{Q}^\RCR)$:
\begin{equation*}\label{eq:key-property}
\E[M^{\FLQR}(T;\joinprob)\mid \vecQueue\supZero = \vecQueue] \geq \E[M^{\SMQ}(T)\mid \vecAgent\supZero = \vecAgent],\:\: \forall\vecQueue,\vecAgent: \agent_0=\sum_{i\in\agentset}\queue_{0,i}, \agent_i=\queue_i, \forall i\in\agentset\setminus\{0\}.
\end{equation*}

Let $M^{\FLQR}_k(Q, t)$ and $M^{\SMQ}_k(A, t)$ be the expected number of matches starting with states $Q$ and $A$ from time $t$ up until the time horizon $T$ expires or until we reach a total of $k$ events, whichever occurs first, under the RCR policy and the random policy respectively. Events include job arrivals, agent arrivals, and reneging. It is sufficient to show that under any strategy profile $\joinprob\in\joinprobset(\mathcal{Q}^\RCR)$, $M^{\FLQR}_k(Q, t)\ge M^{\SMQ}_k(A, t)$ for any finite $k$, any $t\in[0, T]$, any $\vecQueue,\vecAgent$ such that $\agent_0=\sum_{i\in\agentset}\queue_{0,i},\:\agent_i=\queue_i,\:\forall i\in\agentset\setminus\{0\}$. We show this via induction on $k$. It can be seen that $k=0$ holds trivially, so we proceed to show it is true for $k>0$ assuming that it is true for $k-1$.

We fix a sequence of agent and job arrivals within $[t,T]$ (i.e., we condition on the sequence of types and the arrival times of agents and jobs). Also consider a collection of independent exponential random variables representing the times until each of several possible reneging events occur in the system. Then the next event occurs randomly within the following categories:

\begin{enumerate}
\item reneging by an agent counted in $A$, from each type with a non-zero component in $A$;
\item an agent arrival, from a type in $\agentset$ (the arrival time of next agent is not random but fixed);
\item a job arrival, from a type in $\loc$ (the arrival time of next job is not random but fixed).
\end{enumerate}

We let random variable $t'$ represent the sum of $t$ and the minimum of these times. The random variable achieving this minimum determines the event that occurs next. If $T$ occurs before any of these times, then the time horizon expires before the next event occurs.

Under the FRfb policy and the random policy, let random variables $m(\policy^\RCR)$ and $m(\policy^\RND)$ represent whether a match results from the next event, let $Q'(\policy^\RCR)$ and $A'(\policy^\RND)$ represent the system state after the event occurs. Thus,
\begin{equation*}
\begin{split}
M_k^{\policy^\RCR}(Q,t) = \E[m(\policy^\RCR) + M^{\policy^\RCR}_{k-1}(Q'(\policy^\RCR),t')], \\
M_k^{\policy^\RND}(A,t) = \E[m(\policy^\RND) + M^{\policy^\RND}_{k-1}(A'(\policy^\RND),t')].
\end{split}
\end{equation*}

The expectation is taken over $m(\cdot), A'(\cdot), Q'(\cdot)$ and $t'$. To show our result, we will consider the next event on a case-by-case basis to show that with probability one,
\begin{equation}
\label{eq:proof-revisit-theorem-2}
m(\policy^\RCR) + M^{\policy^\RCR}_{k-1}(Q'(\policy^\RCR),t')
\ge m(\policy^\RND) + M^{\policy^\RND}_{k-1}(A'(\policy^\RND),t').
\end{equation}
If $t'>T$, then $m(\policy^\RCR) = m(\policy^\RND) = 0$ and $M^{\policy^\RCR}_{k-1}(Q'(\policy^\RCR),t') = M^{\policy^\RND}_{k-1}(A'(\policy^\RND),t') = 0$, automatically verifying this expression. Thus, it is sufficient to focus on $t'\le T$.

\medskip

\underline{\emph{Case 1}}: The next event is reneging by an agent counted in $A$ or an agent arrival.  In this case, we have $Q'(\policy^\RCR) = A'(\policy^\RND)$ and $m(\policy^\RCR) = m(\policy^\RND) = 0$. Then, the induction hypothesis
shows \eqref{eq:proof-revisit-theorem-2}.

\medskip

\underline{\emph{Case 2}}: The next event is a job of type $j$ and $A_j > 0$. When $j\neq 0$, this implies that $Q_j>0$ and,
\begin{align}
\nonumber
    &m(\policy^\RCR) + M^{\policy^\RCR}_{k-1}(Q'(\policy^\RCR),t') \\[2mm] \nonumber
    =& 1 + \frac{Q_j}{Q_j + Q_{0,j}}M^{\policy^\RCR}_{k-1}(Q - e_j,t') + \frac{Q_{0,j}}{Q_j + Q_{0,j}}M^{\policy^\RCR}_{k-1}(Q - e_{0,j},t')\\[2mm] \label{eq:proof-revisit-theorem-2-0} %
    \ge& 1 + \frac{A_j}{A_0+A_j}M^{\policy^\RCR}_{k-1}(Q - e_j,t') + \frac{A_0}{A_0+A_j}M^{\policy^\RCR}_{k-1}(Q - e_{0,j},t') \\[2mm]  \tag{induction hypothesis} %
    \ge& 1 + \frac{A_j}{A_0+A_j}M^{\policy^\RND}_{k-1}(A - e_j,t') + \frac{A_0}{A_0+A_j}M^{\policy^\RND}_{k-1}(A - e_0,t')  \\ \nonumber
    =& \beta\left(A_0+A_j, \sum_{i\in\agentset}A_i\right)\left(1 + \frac{A_j}{A_0+A_j}M^{\policy^\RND}_{k-1}(A - e_j,t') + \frac{A_0}{A_0+A_j}M^{\policy^\RND}_{k-1}(A - e_0,t') \right)  \\ \nonumber
    &+\left(1 - \beta\left(A_0+A_j, \sum_{i\in\agentset}A_i\right)\right)\left(\frac{A_j}{A_0+A_j}\left(1+M^{\policy^\RND}_{k-1}(A - e_j,t')\right) + \frac{A_0}{A_0+A_j}\left(1+M^{\policy^\RND}_{k-1}(A - e_0,t')\right) \right)   \\ \nonumber
    \ge& \beta\left(A_0+A_j, \sum_{i\in\agentset}A_i\right)\left(1 + \frac{A_j}{A_0+A_j}M^{\policy^\RND}_{k-1}(A - e_j,t') + \frac{A_0}{A_0+A_j}M^{\policy^\RND}_{k-1}(A - e_0,t') \right)  \\ \tag{\Cref{lm:v_flexibility_2}} %
    &+\left(1 - \beta\left(A_0+A_j, \sum_{i\in\agentset}A_i\right)\right)\left(\frac{A_j}{A_0+A_j}M^{\policy^\RND}_{k-1}(A,t') + \frac{A_0}{A_0+A_j}M^{\policy^\RND}_{k-1}(A,t') \right)   \\ \nonumber
    =& \beta\left(A_0+A_j, \sum_{i\in\agentset}A_i\right) + \beta\left(A_0+A_j, \sum_{i\in\agentset}A_i\right)\left(\frac{A_j}{A_0+A_j}M^{\policy^\RND}_{k-1}(A - e_j,t') + \frac{A_0}{A_0+A_j}M^{\policy^\RND}_{k-1}(A - e_0,t') \right)  \\ \nonumber
    &+\left(1 - \beta\left(A_0+A_j, \sum_{i\in\agentset}A_i\right)\right)M^{\policy^\RND}_{k-1}(A,t')   \\ \nonumber
    =&m(\policy^\RND) + M^{\policy^\RND}_{k-1}(A'(\policy^\RND),t'),
\end{align}
showing \eqref{eq:proof-revisit-theorem-2}. Inequality \eqref{eq:proof-revisit-theorem-2-0} holds because: (1) $Q_{0,j}\le A_0,\: Q_j=A_j$ which leads to $Q_j/(Q_j + Q_{0,j})\ge A_j/(A_j + A_0),\: Q_{0,j}/(Q_j + Q_{0,j})\le A_0/(A_j + A_0)$; (2)  $M^{\policy^\RCR}_{k-1}(Q - e_j,t')\ge M^{\policy^\RCR}_{k-1}(Q - e_{0,j},t')$ by \Cref{lm:v_flexibility_2}. %

\smallskip

When $j=0$, we have $\sum_{i\in\agentset}Q_{0,i}>0$. Under the case that $Q_{0,0}>0$,
\begin{align}
\nonumber
    &m(\policy^\RCR) + M^{\policy^\RCR}_{k-1}(Q'(\policy^\RCR),t') \\[2mm] \nonumber
    =& 1 + M^{\policy^\RCR}_{k-1}(Q - e_{0,0},t')  \\[2mm] \tag{induction hypothesis} \ge& 1 + M^{\policy^\RND}_{k-1}(A - e_0,t')  \\[2mm] \nonumber
    =& \beta\left(A_0, \sum_{i\in\agentset}A_i\right)\left(1 + M^{\policy^\RND}_{k-1}(A - e_0,t')\right) + \left(1- \beta\left(A_0, \sum_{i\in\agentset}A_i\right)\right)\left(1 + M^{\policy^\RND}_{k-1}(A - e_0,t')\right) \\ \tag{\Cref{lm:v_flexibility_2}} %
    \ge& \beta\left(A_0, \sum_{i\in\agentset}A_i\right)\left(1+M^{\policy^\RND}_{k-1}(A - e_0,t')\right) + \left(1-\beta\left(A_0, \sum_{i\in\agentset}A_i\right)\right) M^{\policy^\RND}_{k-1}(A,t') \\[2mm] \nonumber
    =&m(\policy^\RND) + M^{\policy^\RND}_{k-1}(A'(\policy^\RND),t'),
\end{align}
showing \eqref{eq:proof-revisit-theorem-2}. %

\smallskip

Under the case that $Q_{0,0}=0$, let random variable $X\in\{0,\cdots,\numAgent\}$ denote the queue number from which the job is matched to a type $0$ agent conditioning on the job being successfully matched,
\begin{align}
\nonumber
    &m(\policy^\RCR) + M^{\policy^\RCR}_{k-1}(Q'(\policy^\RCR),t') \\[2mm] \nonumber
        =& \beta\left(\sum_{i\in\agentset}Q_{0,i},\:\sum_{i\in\agentset}Q_{0,i} + \sum_{i\in\agentset\setminus\{0\}}Q_i\right) \E_X\left[1 + M^{\policy^\RCR}_{k-1}(Q - e_{0,X},t')\right]  \\  \nonumber %
    &+\left(1 - \beta\left(\sum_{i\in\agentset}Q_{0,i},\: \sum_{i\in\agentset}Q_{0,i} + \sum_{i\in\agentset\setminus\{0\}}Q_i\right)\right)M^{\policy^\RCR}_{k-1}(Q,t') \\ \nonumber
    =& \beta\left(A_0, \sum_{i\in\agentset}A_i\right) \E_X\left[ 1 + M^{\policy^\RCR}_{k-1}(Q - e_{0,X},t')\right] +\left(1 - \beta\left(A_0, \sum_{i\in\agentset}A_i\right)\right)M^{\policy^\RCR}_{k-1}(Q,t') \\ \tag{induction hypothesis} %
    \ge& \beta\left(A_0, \sum_{i\in\agentset}A_i\right) \left( 1 + M^{\policy^\RND}_{k-1}(A - e_0,t')\right) + \left(1 - \beta\left(A_0, \sum_{i\in\agentset}A_i\right)\right)M^{\policy^\RND}_{k-1}(A,t') \\[2mm] \nonumber
    =&m(\policy^\RND) + M^{\policy^\RND}_{k-1}(A'(\policy^\RND),t'),
\end{align}
showing \eqref{eq:proof-revisit-theorem-2}. %

\medskip

\emph{\underline{Case 3}}: The next event is a job of type $j\in\loc$ and $A_j = 0$ as well as $A_0=0$. This implies that $Q_j=0$ and $\sum_{i\in\agentset}Q_{0,i}=0$. 
In this case, \eqref{eq:proof-revisit-theorem-2} holds simply by the induction hypothesis. %

\medskip

\emph{\underline{Case 4}}: The next event is a job of type $j\neq0$ and $A_j = 0$ but $A_0>0$. This implies that $Q_j=0$ and $\sum_{i\in\agentset}Q_{0,i}>0$. Under the case that $Q_{0,j}=0$ and $Q_{0,0}=0$, let random variable $X\in\{0,\cdots,\numAgent\}, X\neq j$ denote the queue number from which the job is matched to a type 0 agent conditioning on the job being successfully matched,
\begin{align}
\nonumber
    &m(\policy^\RCR) + M^{\policy^\RCR}_{k-1}(Q'(\policy^\RCR),t') \\[2mm] \nonumber
    =& \beta\left(\sum_{i\in\agentset}Q_{0,i},\: \sum_{i\in\agentset}Q_{0,i} + \sum_{i\in\agentset\setminus\{0\}}Q_i\right) \E_X\left[ 1 + M^{\policy^\RCR}_{k-1}(Q - e_{0,X},t')\right]  \\ \nonumber %
    &+\left(1 - \beta\left(\sum_{i\in\agentset}Q_{0,i},\: \sum_{i\in\agentset}Q_{0,i} + \sum_{i\in\agentset\setminus\{0\}}Q_i\right)\right)M^{\policy^\RCR}_{k-1}(Q,t') \\ \nonumber
    =& \beta\left(A_0, \sum_{i\in\agentset}A_i\right) \E_X\left[ 1 + M^{\policy^\RCR}_{k-1}(Q - e_{0,X},t')\right] + \left(1 - \beta\left(A_0, \sum_{i\in\agentset}A_i\right)\right)M^{\policy^\RCR}_{k-1}(Q,t') \\ \tag{induction hypothesis}  %
    \ge& \beta\left(A_0, \sum_{i\in\agentset}A_i\right) \left( 1 + M^{\policy^\RND}_{k-1}(A - e_0,t')\right) +\left(1 - \beta\left(A_0, \sum_{i\in\agentset}A_i\right)\right)M^{\policy^\RND}_{k-1}(A,t') \\[2mm] \nonumber
    =&m(\policy^\RND) + M^{\policy^\RND}_{k-1}(A'(\policy^\RND),t'),
\end{align}
showing \eqref{eq:proof-revisit-theorem-2}. %

\smallskip

Under the case that $Q_{0,j}>0$, 
\begin{align}
\nonumber
    &m(\policy^\RCR) + M^{\policy^\RCR}_{k-1}(Q'(\policy^\RCR),t') \\[2mm] \nonumber
    =&1 + M^{\policy^\RCR}_{k-1}(Q-e_{0,j},t') \\ \nonumber
    =& \beta\left(A_0,\: \sum_{i\in\agentset}A_i\right) \left( 1 + M^{\policy^\RCR}_{k-1}(Q - e_{0,j},t')\right) +\left(1 - \beta\left(A_0,\: \sum_{i\in\agentset}A_i\right)\right)\left( 1 + M^{\policy^\RCR}_{k-1}(Q - e_{0,j},t')\right) \\ \nonumber
    \ge& \beta\left(A_0,\: \sum_{i\in\agentset}A_i\right) \left( 1 + M^{\policy^\RND}_{k-1}(A - e_{0},t')\right) \\ \tag{induction hypothesis}
    &+\left(1 - \beta\left(A_0,\: \sum_{i\in\agentset}A_i\right)\right)\left( 1 + M^{\policy^\RND}_{k-1}(A - e_{0},t')\right) \\ \tag{\Cref{lm:v_flexibility_2}}
    \ge& \beta\left(A_0,\: \sum_{i\in\agentset}A_i\right) \left( 1 + M^{\policy^\RND}_{k-1}(A - e_0,t')\right) + \left(1 - \beta\left(A_0,\: \sum_{i\in\agentset}A_i\right)\right)M^{\policy^\RND}_{k-1}(A,t') \\[2mm] \nonumber
    =&m(\policy^\RND) + M^{\policy^\RND}_{k-1}(A'(\policy^\RND),t'),
\end{align}
showing \eqref{eq:proof-revisit-theorem-2}. %
Finally, under the case that $Q_{0,j}=0$ but $Q_{0,0}>0$,
\begin{align}
\nonumber
    &m(\policy^\RCR) + M^{\policy^\RCR}_{k-1}(Q'(\policy^\RCR),t') \\[2mm] \nonumber
    =&1 + M^{\policy^\RCR}_{k-1}(Q-e_{0,0},t') \\ \nonumber
    =& \beta\left(A_0,\: \sum_{i\in\agentset}A_i\right) \left( 1 + M^{\policy^\RCR}_{k-1}(Q - e_{0,0},t')\right) +\left(1 - \beta\left(A_0,\: \sum_{i\in\agentset}A_i\right)\right)\left( 1 + M^{\policy^\RCR}_{k-1}(Q - e_{0,0},t')\right) \\ \nonumber
    \ge& \beta\left(A_0,\: \sum_{i\in\agentset}A_i\right) \left( 1 + M^{\policy^\RND}_{k-1}(A - e_{0},t')\right) \\ \tag{induction hypothesis}
    &+\left(1 - \beta\left(A_0,\: \sum_{i\in\agentset}A_i\right)\right)\left( 1 + M^{\policy^\RND}_{k-1}(A - e_{0},t')\right) \\ \tag{\Cref{lm:v_flexibility_2}}
    \ge& \beta\left(A_0,\: \sum_{i\in\agentset}A_i\right) \left( 1 + M^{\policy^\RND}_{k-1}(A - e_0,t')\right) + \left(1 - \beta\left(A_0,\: \sum_{i\in\agentset}A_i\right)\right)M^{\policy^\RND}_{k-1}(A,t') \\[2mm] \nonumber
    =&m(\policy^\RND) + M^{\policy^\RND}_{k-1}(A'(\policy^\RND),t'),
\end{align}
showing \eqref{eq:proof-revisit-theorem-2}. This completes the proof.~$\square$ 

\end{proof}

\medskip

\begin{proof}{\textbf{Proof of \Cref{thm:gap_first_best}.}}
Let $\agentset_T$ be the set of agents that arrive in sample path $\omega_T$ during time $[0,T]$.
For any matching policy $\policy$ define
\begin{flalign*}
    \mathcal{A}_1^{\policy} =\{a\in \agentset_T: a\:\: \text{is matched by } \policy\},\:\:\: \mathcal{A}_2^{\policy} =\{a\in \agentset_T: a\:\: \text{reneges in } \policy\},
\end{flalign*}
where in $\mathcal{A}_2^{\policy}$ we include those agents did not renege in $[0,T]$ but were not matched as well. With a bit of abuse of notation, let $\pi(\omega_T)$ be the total number of matches under sample path $\omega_T$ and policy $\pi$. Let OPT and OPT($\omega_T$) be the offline optimal policy and its number of matches. We have that 
\begin{flalign*}
    \OPT(\omega_T)-\policy(\omega_T)&=|\mathcal{A}_1^{\OPT}|-|\mathcal{A}_1^\policy|\\
    &=|\mathcal{A}_1^{\OPT}\cap \mathcal{A}_1^\policy|+|\mathcal{A}_1^{\OPT}\cap \mathcal{A}_2^\policy|-|\mathcal{A}_1^\policy\cap \mathcal{A}_1^{\OPT}|-|\mathcal{A}_1^\policy\cap \mathcal{A}_2^{\OPT}|\\
    &=|\mathcal{A}_1^{\OPT}\cap \mathcal{A}_2^\policy|-|\mathcal{A}_1^\policy\cap \mathcal{A}_2^{\OPT}|\\
    &\leq |\mathcal{A}_1^{\OPT}\cap \mathcal{A}_2^\policy|.
\end{flalign*}
Consider an agent $a$ in $\mathcal{A}_1^{\OPT}\cap \mathcal{A}_2^{\policy}$, this agent is matched by $\OPT$ at time, say, $t$. At this time $a$ is in both systems, the one run by $\OPT$ and $\policy$. Also, since $\OPT$ is matching $a$ there must be a job arrival at $t$. Policy $\policy$ is not matching this incoming job arrival with $a$ (because $a$ reneges in $\policy$); but since $a$ is in the system run by $\policy$ and $\policy$ is non-idling then it must be that $\policy$ is matching the job arrival to some other agent $a'$. That is, for every $a\in \mathcal{A}_1^{\OPT}\cap \mathcal{A}_2^{\policy}$ there exists another $a'\in \agentset_T$ that is matched by $\policy$. Therefore $|\mathcal{A}_1^{\OPT}\cap \mathcal{A}_2^{\policy}|\le \policy(\omega_T)$. This concludes the proof.~$\square$
\end{proof}

\medskip

\begin{proof}{\textbf{Proof of \Cref{prop:fluid_UB}.}}
Our first-best dynamic stochastic matching problem is a special case of the limited-time dynamic stochastic matching problem studied in \cite{aouad2022dynamic}. In particular, rather than a general compatibility graph, we consider a bipartite structure where types are partitioned into two disjoint sets, $\agentset$ and $\loc$, with matches permitted only between them; moreover, jobs in $\loc$ have zero patience. As a result, our linear program specializes program \textsf{(CB)} in \cite{aouad2022dynamic}, involving only matching variables $(x_{i,j})_{i\in\agentset,\, j\in\loc(i)}$, where $i\in\agentset$ are always the active vertices and $j\in\loc$ the passive vertices in the terminology of \cite{aouad2022dynamic} and \cite{huang2018match}, with \emph{no} variables of the form $x_{j,i}$. The result then follows directly from the proof of Lemma~1 in \cite{aouad2022dynamic}. $\square$
\end{proof}

\end{document}